\documentclass[11pt]{article}

\usepackage[margin=1in]{geometry}
\usepackage{amsmath,amssymb,amsthm,mathtools,bm}
\usepackage{graphicx}
\usepackage{booktabs}
\usepackage{array}
\usepackage{microtype}
\usepackage{enumitem}
\usepackage{url}
\usepackage[colorlinks=true,linkcolor=blue!45!black,citecolor=blue!45!black,urlcolor=blue!55!black]{hyperref}
\usepackage[nameinlink,capitalize]{cleveref}
\hypersetup{pdftitle={Rare Fluctuations from Normally Hyperbolic Invariant Manifolds},pdfauthor={Stephen Wiggins}}
\usepackage{caption}
\usepackage{subcaption}
\usepackage{xcolor}

\graphicspath{{figures/}}
\allowdisplaybreaks

\newtheorem{theorem}{Theorem}[section]
\newtheorem{proposition}[theorem]{Proposition}

\newtheorem{corollary}[theorem]{Corollary}
\newtheorem{assumption}[theorem]{Assumption}
\theoremstyle{definition}

\theoremstyle{remark}
\newtheorem{remark}[theorem]{Remark}

\newcommand{\R}{\mathbb{R}}
\newcommand{\T}{\mathbb{T}}

\newcommand{\C}{\mathcal{C}}
\newcommand{\Lagr}{\Lambda}
\newcommand{\Z}{\mathcal{Z}}
\newcommand{\Hs}{H_{\mathrm{FW}}}
\newcommand{\phit}{\phi_t}

\newcommand{\eps}{\varepsilon}
\newcommand{\dd}{\mathrm{d}}
\newcommand{\cT}{\mathsf{T}}

\newcommand{\rank}{\operatorname{rank}}
\newcommand{\diag}{\operatorname{diag}}

\newcommand{\doi}[1]{\href{https://doi.org/#1}{doi:#1}}

\title{\textbf{Rare Fluctuations from Normally Hyperbolic Invariant Manifolds}}
\author{
Stephen Wiggins\\[0.5em]
\small Hetao Institute of Mathematics and Interdisciplinary Sciences, Shenzhen, China\\
\small School of Mathematics, University of Bristol, Bristol, United Kingdom
}
\date{August 2026}

\begin{document}
\maketitle

\begin{abstract}
Freidlin--Wentzell theory converts weak-noise large deviations into a Hamiltonian variational problem. We study how a $k$-dimensional normally hyperbolic invariant manifold (NHIM) $N$ of the deterministic dynamics appears in this Hamiltonian system. Its zero-momentum copy $N_0=N\times\{0\}$ is invariant, but the Hamiltonian dynamics has $2k$ center directions near $N_0$: $k$ tangent to $N$ and $k$ conjugate covector directions. We construct the resulting local symplectic geometry and show that fluctuation extremals approaching $N_0$ backward in time at the strong normal rate form an $n$-dimensional exact Lagrangian invariant manifold carrying a single-valued action. A counterexample shows that a corresponding zero-energy section need not be normally hyperbolic within $H_{\mathrm{FW}}^{-1}(0)$. We then consider reaction dynamics. If a parameter moves a deterministic trajectory toward a codimension-one reactivity boundary, the minimum Freidlin--Wentzell action required to reach the boundary is quadratic in the distance from the threshold parameter. Its coefficient is determined by the relative motion of trajectory and boundary and by how effectively the available noise acts transversely. This deterministic boundary is distinct from a noise-dependent stochastic transition state or a committor surface. In a solvent--solute model, varying solvent mass moves the phase-space reactivity boundary while leaving the potential-energy surface fixed, changing the rare-event cost without changing the potential-energy barrier.
\end{abstract}

\noindent\textbf{Keywords:} Freidlin--Wentzell theory; normally hyperbolic invariant manifold; exact Lagrangian manifold; transition-state theory; reactivity boundary; committor; minimum action; solvent dynamics.

\noindent\textbf{Mathematics Subject Classification (2020):} 37D10, 37J39, 60F10, 60H10, 65P10.

\section{Introduction}

Freidlin--Wentzell theory \cite{FreidlinWentzell2012} has an especially useful feature from the point of view of dynamical systems: in the weak-noise limit, probabilities of rare departures from deterministic motion can be studied through a variational problem.  Consider a diffusion on a smooth $n$-dimensional state space $M$,
\begin{equation}
\dd X_t^\eps=b(X_t^\eps)\,\dd t+\sqrt{\eps}\,\sigma(X_t^\eps)\,\dd W_t,
\qquad a(x)=\sigma(x)\sigma(x)^{\cT}.
\label{eq:sde-intro}
\end{equation}
Here $b$ generates the deterministic motion, $\sigma$ specifies the directions and amplitudes of the noise, and $\eps$ is the small noise-strength parameter.  The symmetric positive-semidefinite matrix $a(x)=\sigma(x)\sigma(x)^{\cT}$ appearing in \eqref{eq:sde-intro} is the diffusion, or local covariance, matrix.  Over a short time interval $\dd t$, the covariance of the stochastic increment is $\eps a(x)\,\dd t$.  Thus $a$ records the directions, correlations, and relative strengths of the noise.  It will enter directly into both the Freidlin--Wentzell action and its Hamiltonian formulation.

It is useful to say at the outset what we mean by a \emph{rare fluctuation path}.  The stochastic process $X_t^\eps$ has random sample paths.  Freidlin--Wentzell theory associates with a sufficiently regular deterministic path $x(t)$ an action $I[x]$ that measures, on the logarithmic small-noise scale, how unlikely it is for a sample path to remain close to $x(t)$ \cite{FreidlinWentzell2012}.  A deterministic trajectory satisfying $\dot x=b(x)$ has zero action.  A path with $\dot x\ne b(x)$ requires an effective noise forcing.  If that departure can be produced by the available noise directions, its action is positive; if it cannot, its action is infinite.  Heuristically, a small tube about such a path has probability of order $\exp[-I[x]/\eps]$.  The path is therefore \emph{rare} because its probability is exponentially small as $\eps\to0$, and it is a \emph{fluctuation} path because its departure from the deterministic drift represents the effect of the noise.  The path $x(t)$ itself is not random once it has been specified.  We will call stationary or minimizing paths of the action \emph{fluctuation extremals}; in the Hamiltonian formulation these are represented by trajectories $(x(t),p(t))$.

Extremals of the action satisfy Hamilton's equations for
\begin{equation}
\Hs(x,p)=p\cdot b(x)+\frac12 p^{\cT}a(x)p.
\label{eq:fw-hamiltonian-intro}
\end{equation}
The new variable $p$ has $n$ components and is conjugate to $x$.  Thus the original $n$-dimensional stochastic state space gives rise to a $2n$-dimensional Hamiltonian phase space with coordinates $(x,p)$.  The variable $p$ is an auxiliary momentum arising from the variational problem; it need not be a physical mechanical momentum.  The original deterministic dynamics is recovered by setting $p=0$.  Nonzero values of $p$ occur on Hamiltonian extremals representing departures from deterministic motion; the minimizing extremals are the large-deviation paths that dominate the corresponding rare event.

Geometrically, $(x,p)$ belongs to the cotangent bundle $T^*M$.  For each $x\in M$, the possible covectors $p$ form the cotangent space $T_x^*M$, called the fiber over $x$.  A section of $T^*M$ assigns one covector to each base point; the choice $p=0$ at every $x$ is the zero section.  We will use this language only when it helps to describe how the invariant structures fit together.  Standard references for the differential-geometric terminology used below include Abraham, Marsden, and Ratiu \cite{AbrahamMarsdenRatiu1988} and Lee \cite{Lee2012}; for the Hamiltonian and symplectic background see also Marsden and Ratiu \cite{MarsdenRatiu1999}.

For attracting equilibria, the geometry generated by minimum-action fluctuation paths has been studied in considerable detail.  Ray methods for singularly perturbed diffusion equations were developed by Cohen and Lewis \cite{CohenLewis1967} and used by Ludwig \cite{Ludwig1975}, who interpreted the rays in the random-perturbation problem as paths of maximum likelihood.  Dykman and Krivoglaz developed an optimal-fluctuation approach to noise-induced transitions between stable states \cite{DykmanKrivoglaz1979}.  Graham and T\'el used the weak-noise Hamiltonian formulation to investigate nonequilibrium analogues of potential functions \cite{GrahamTel1984,GrahamTel1985}.  These and related approaches became important tools in the study of irreversible escape \cite{MaierStein1993PRE,MaierStein1993PRL}.

For a deterministic attractor $A$, the minimum action required to reach a state $x$ is called the \emph{quasipotential}.  The terminology reflects the fact that this minimum-action function plays a role analogous to a potential even when the drift is not generated by the gradient of an ordinary potential \cite{FreidlinWentzell2012}.  More than one extremal path can sometimes reach the same state.  Each path then carries its own action value, while the quasipotential takes the smallest of the competing values.  A fold in the projection of the extremal family to the original state space produces a caustic; a switching set occurs where two minimizing action branches exchange which one has the smaller value.  Caustics, switching sets, nondifferentiable quasipotentials, and related symmetry-breaking phenomena were studied by Graham and T\'el, Jauslin, Maier and Stein, Dykman and collaborators, and Smelyanskiy, Dykman, and Maier \cite{GrahamTel1985,Jauslin1987,MaierStein1996,MaierStein1997,SmelyanskiyDykmanMaier1997}.

The first question of this paper is what changes when the deterministic set from which the rare fluctuation is considered is not an equilibrium point but a compact normally hyperbolic invariant manifold $N\subset M$.  Let $\dim N=k$.  The obvious copy of $N$ in the Freidlin--Wentzell phase space is
\begin{equation}
N_0=N\times\{0\}\subset T^*M.
\label{eq:N0-intro}
\end{equation}
It is invariant because $p=0$ is invariant.  It is not, however, the whole neutral geometry near $N_0$.  At each point of $N$ there are $k$ tangent directions.  In the Hamiltonian system these are accompanied by $k$ conjugate covector directions.  Consequently the directions that are neither strongly contracting nor strongly expanding have dimension $2k$, not $k$.  One aim of the paper is to determine how these $2k$ directions fit together into an actual invariant manifold near $N_0$.

This leads to a local cotangent-space picture.  We show that the $2k$ neutral directions form a symplectic family of subspaces along $N_0$ and that, when the normal contraction and expansion dominate the tangential motion sufficiently strongly, they are tangent to a $2k$-dimensional locally invariant symplectic manifold.  Here ``local'' means in a sufficiently small neighborhood of $N_0$.  Near $N_0$, this invariant manifold has the same symplectic structure as a neighborhood of the zero section in $T^*N$.  This does not mean that an arbitrarily chosen copy of $T^*N$ inside $T^*M$ is invariant, nor does it produce a unique global center manifold.

The local center manifold is useful for understanding the Hamiltonian geometry, but a rare-event calculation needs the Hamiltonian fluctuation extremals associated with $N_0$.  Because $N_0$ is invariant, a trajectory whose initial condition lies on $N_0$ at a finite time can never leave it.  The relevant trajectories are instead those that approach an orbit in $N_0$ as $t\to-\infty$ at the strong normal rate and move away from its neighborhood in forward time along the strongly unstable directions.  They are \emph{backward asymptotic} to $N_0$, not trajectories that start on $N_0$ and then depart.  Taking the union of their strong-unstable fibers over $N_0$ gives $W^{uu}(N_0)$; the notation $uu$ will be defined precisely when this manifold is constructed.  We prove that this union has dimension $n$, exactly half the dimension of the Freidlin--Wentzell phase space, and that the action accumulated along its trajectories defines a single-valued function on it.  In symplectic terminology, these properties say that $W^{uu}(N_0)$ is an exact Lagrangian manifold.  The terminology is introduced only after the corresponding geometric statements have been established.

A separate issue is whether the stability of the deterministic NHIM is automatically inherited by invariant structures of the Freidlin--Wentzell Hamiltonian system away from $p=0$.  It is not.  A nonzero conjugate momentum changes the motion in the base variables $x$, so a Hamiltonian trajectory can sample the transverse stability rates differently from the deterministic trajectory on $N_0$.  We give an explicit example in which $N$ is uniformly normally attracting but a compact regular zero-energy section of the corresponding Hamiltonian center dynamics is not normally hyperbolic within the ambient zero-energy hypersurface $H_{\mathrm{FW}}^{-1}(0)$.  Thus deterministic normal hyperbolicity organizes the local geometry near $N_0$, but additional rate information is needed to control a larger zero-energy invariant set away from $N_0$.

Normal hyperbolicity supplies the mathematical framework for these constructions.  Persistence and stable and unstable manifold theory for NHIMs was developed by Fenichel and by Hirsch, Pugh, and Shub \cite{Fenichel1971,Fenichel1974,HirschPughShub1977}; geometric treatments include \cite{Wiggins1994,Eldering2013}.  Center manifolds over smooth invariant manifolds and compact invariant sets were constructed by Chow, Liu, and Yi \cite{ChowLiuYi2000Smooth,ChowLiuYi2000Sets}; related integrability questions for partially hyperbolic center structures were studied by Bonatti and Crovisier \cite{BonattiCrovisier2016}, and Hamiltonian center-manifold reduction by Mielke \cite{Mielke1991}.  Global fiber structure of invariant manifolds is discussed in \cite{ElderingKvalheimRevzen2018}.  Weinstein's theorem supplies the local symplectic model of a neighborhood of a Lagrangian submanifold by a neighborhood of the zero section of its cotangent bundle \cite{Weinstein1971}.  Stable and unstable invariant manifolds in Hamiltonian systems are also closely related to exact Lagrangian geometry and generating functions \cite{RudnevWiggins1999}.

Large-deviation problems have also been studied when the relevant deterministic invariant set is more complicated than an equilibrium.  Stable limit cycles exhibit logarithmic oscillations in escape rates, switching structures, and nontrivial families of optimal paths \cite{MaierStein1996PRL,SmelyanskiyDykmanMaier1997}.  Quasipotentials for nongradient systems have been computed for equilibria and cycles \cite{Cameron2012}, and local stochastic sensitivity and quadratic quasipotential approximations have been developed near invariant tori \cite{BashkirtsevaRyashko2016}.  Poquet showed that, for systems obtained by a sufficiently small perturbation of a gradient flow possessing an attracting curve, the escape problem can under appropriate hypotheses be reduced to a one-dimensional variational problem along that curve \cite{Poquet2014}.  The invariant manifolds studied here belong to the auxiliary deterministic Freidlin--Wentzell Hamiltonian system.  They should therefore be distinguished from sample-dependent random invariant manifolds of the original stochastic system \cite{Arnold1998,Boxler1989}.

The second question of the paper arises from a different use of invariant phase-space geometry in reaction dynamics.  The periodic-orbit foundations of dynamical transition-state theory for low-dimensional systems were developed by Pechukas, Pollak, and collaborators \cite{PechukasMcLafferty1973,PollakPechukas1978,PechukasPollak1979}; the later work of Wiggins, Wiesenfeld, Jaff\'e, Uzer, and collaborators generalized the construction to higher-dimensional NHIMs and their stable and unstable manifolds \cite{WigginsWiesenfeldJaffeUzer2001,UzerEtAl2002,WaalkensSchubertWiggins2008}.  Kawai, Komatsuzaki, Nagahata, and collaborators developed the related language of reaction and reactivity boundaries \cite{KawaiKomatsuzaki2010,NagahataEtAl2013a,NagahataEtAl2013b}.  Deterministic time dependence, stochastic transition-state theory, and committor-based transition-path theory lead to other moving or probabilistic structures; because these objects answer different questions, their distinctions and the corresponding literature are reviewed in \Cref{sec:switching-law}.

The present question is a weak-noise large-deviation question connecting deterministic reactivity geometry with the Freidlin--Wentzell variational principle.  Consider a family of deterministic systems depending on a parameter $\mu$, together with a corresponding family of initial conditions $x_0(\mu)$.  Let $\mathcal W_\mu$ denote a smooth codimension-one deterministic reactivity boundary for the system at parameter value $\mu$.  Both the initial condition and the boundary may move as $\mu$ varies.  We suppose that at a critical value $\mu_c$ the unforced trajectory lies on the corresponding boundary and that the relative displacement of the trajectory and boundary changes transversely with $\mu$.  No deterministic trajectory crosses an invariant stable manifold for a fixed value of $\mu$; what changes with $\mu$ is the relative position of two parameter-dependent families.

For $\mu$ close to $\mu_c$, we ask for the minimum Freidlin--Wentzell action needed for a weak fluctuation to reach the deterministic reactivity boundary.  This is deliberately not a construction of a stochastic transition state and not a computation of an isocommittor surface.  It is the least large-deviation cost of reaching a specified deterministic geometric separator.  We prove that this cost is quadratic:
\[
S_T^{\rm loc}(\mu)
=\frac{a_T^2}{2Q_T}(\mu-\mu_c)^2+O(|\mu-\mu_c|^3).
\]
The coefficient separates two effects.  The quantity $a_T$ measures the first-order relative motion of the deterministic trajectory and the moving boundary as the parameter changes.  The quantity $Q_T$ measures how effectively the available forcing can move the trajectory in the boundary-normal direction; in linear-control language it is a scalar projection of the controllability Gramian \cite{Sontag1998}.  The same large-deviation setting underlies minimum-action methods developed by E, Ren, Vanden-Eijnden and by Heymann and Vanden-Eijnden \cite{ERenVandenEijnden2004,HeymannVandenEijnden2008}, but the local parameter-dependent boundary-crossing law derived here is the specific question of interest.

The principal application is the solvent--solute Hamiltonian introduced by Garcia-Meseguer and Carpenter and subsequently studied in phase space by Garcia-Meseguer, Carpenter, and Wiggins \cite{GarciaMeseguerCarpenter2019,GarciaMeseguerCarpenterWiggins2019}.  The potential-energy surface is held fixed while the effective solvent mass is varied.  Changing the mass changes the kinetic part of the dynamics and moves the deterministic transition-state geometry even though the potential-energy saddle does not move.  This makes it possible to compare a fixed potential-energy barrier with a moving phase-space reactivity boundary.  Weak noise is then added only to ask for the minimum action required to reach that deterministic boundary.  The calculation is therefore conceptually different from stochastic transition-state theory, where the transition-state geometry itself depends on the noise realization, and from a committor calculation, where the separator is probabilistic.

The organization follows these questions.  \Cref{sec:fw} introduces the Freidlin--Wentzell action and Hamiltonian and states the normal-hyperbolicity assumptions.  \Cref{sec:carrier} explains how the deterministic stable, tangent, and unstable directions appear in the Hamiltonian variational dynamics and constructs the local symplectic center geometry.  \Cref{sec:lagrangian} constructs the strong-unstable invariant manifold of fluctuation extremals that approach $N_0$ in backward time at the strong normal rate and relates its action to the quasipotential.  \Cref{sec:counterexample} shows that deterministic normal hyperbolicity need not make the corresponding zero-energy section normally hyperbolic within $H_{\mathrm{FW}}^{-1}(0)$.  \Cref{sec:exact-torus} gives one short solvable example in which the fluctuation manifold and its action can be written explicitly.  \Cref{sec:switching-law} places the reaction problem in the context of deterministic reactivity boundaries, stochastic transition-state theory, and committor-based transition-path theory, and then proves the quadratic action law.  \Cref{sec:solvent-application} applies that result to solvent-inertia relocation of a deterministic phase-space reactivity boundary.  The appendices collect reproducibility details and a dimension summary.

\section{Freidlin--Wentzell Hamiltonian dynamics}
\label{sec:fw}

We now state the Freidlin--Wentzell construction more precisely.  We work on $M=\R^n$, or in local coordinates on a smooth manifold.  To allow the number of independent noise directions to differ from the dimension of the state space, we write the noise-amplitude matrix as $B(x)\in\R^{n\times m}$.  Its columns specify the directions in state space in which the noise acts, and
\begin{equation}
a(x)=B(x)B(x)^{\cT}
\label{eq:diffusion-matrix}
\end{equation}
is the $n\times n$ diffusion matrix introduced in \eqref{eq:sde-intro}.  It is the covariance matrix of the noise increments after the overall factor $\eps$ is removed, and it records which combinations of state variables are directly forced.  The matrix $a$ need not be invertible.  When it is singular, the noise acts directly in only part of the state space; this is commonly called a degenerate diffusion or degenerate noise.  The solvent example below has this form because the random force acts only on the solvent momentum.

For a possibly singular diffusion matrix, it is useful to formulate the Freidlin--Wentzell action directly in terms of a control \cite{FreidlinWentzell2012}.  The control $u(t)\in\R^m$ represents the departure from the deterministic motion, and its squared $L^2$ norm measures the large-deviation cost.  For a prescribed absolutely continuous path $x:[0,T]\to M$, define
\begin{equation}
v_x(t)=\dot x(t)-b(x(t)).
\label{eq:path-departure}
\end{equation}
The path can be generated by the controlled system precisely when
\begin{equation}
B(x(t))u(t)=v_x(t)
\label{eq:path-control-constraint}
\end{equation}
for almost every $t$.  Its action is
\begin{equation}
I_{[0,T]}[x]
=
\inf_{u\in L^2([0,T];\R^m)}
\left\{
\frac12\int_0^T |u(t)|^2\,\dd t:
B(x(t))u(t)=\dot x(t)-b(x(t))
\right\}.
\label{eq:control-action}
\end{equation}
If no square-integrable control satisfies the constraint, the action is $+\infty$.  Equation~\eqref{eq:control-action} makes explicit how the prescribed path enters the minimization: both the required departure $\dot x-b(x)$ and the available forcing directions $B(x)$ are evaluated along that path.

When $a(x)$ is positive definite, the minimum-norm control satisfying \eqref{eq:path-control-constraint} is
\begin{equation}
u_*(t)=B(x(t))^{\cT}a(x(t))^{-1}
\bigl[\dot x(t)-b(x(t))\bigr].
\label{eq:min-control-full-rank}
\end{equation}
Substituting this expression into \eqref{eq:control-action} gives the familiar formula
\begin{equation}
I_{[0,T]}[x]
=\frac12\int_0^T
(\dot x-b(x))^{\cT}a(x)^{-1}(\dot x-b(x))\,\dd t.
\label{eq:action}
\end{equation}
Equation~\eqref{eq:action} is therefore a convenient special form available only when $a$ is invertible.  When $a$ is singular we do not use $a^{-1}$; the control formulation \eqref{eq:control-action} remains the definition of the action.

The Hamiltonian can also be obtained without assuming that $a$ is invertible.  Introduce an independent conjugate variable $p\in T_x^*M$ and maximize over the control:
\begin{equation}
\Hs(x,p)
=
\sup_{u\in\R^m}
\left[
 p\cdot\bigl(b(x)+B(x)u\bigr)-\frac12|u|^2
\right].
\label{eq:hamiltonian-control-sup}
\end{equation}
The maximizing control is
\begin{equation}
u_*=B(x)^{\cT}p,
\label{eq:optimal-control-p}
\end{equation}
and hence
\begin{equation}
\Hs(x,p)=p\cdot b(x)+\frac12|B(x)^{\cT}p|^2
=p\cdot b(x)+\frac12p^{\cT}a(x)p.
\label{eq:legendre}
\end{equation}
No inverse diffusion matrix has been introduced.  If $a$ is positive definite, the Hamiltonian relation $\dot x=b(x)+a(x)p$ can be inverted to give $p=a^{-1}(\dot x-b)$.  If $a$ is singular, this inversion is not generally possible or unique; $p$ is instead treated as the independent conjugate variable appearing in \eqref{eq:hamiltonian-control-sup}.

Hamilton's equations are
\begin{align}
\dot x&=b(x)+a(x)p,
\label{eq:ham-x}\\
\dot p&=-Db(x)^{\cT}p-\frac12\nabla_x\bigl(p^{\cT}a(x)p\bigr).
\label{eq:ham-p}
\end{align}
Two standard geometric objects associated with a Hamiltonian system will be needed later \cite{MarsdenRatiu1999}.  The canonical one-form and symplectic form on $T^*M$ are
\begin{equation}
\vartheta=p\cdot\dd x,
\qquad
\omega=-\dd\vartheta=\sum_{j=1}^n \dd x_j\wedge\dd p_j.
\label{eq:canonical-forms}
\end{equation}
The symplectic form will be used to characterize the center and fluctuation manifolds.  The canonical one-form will later connect the fluctuation manifold to the Freidlin--Wentzell action.

The zero section
\begin{equation}
\Z=\{(x,p):p=0\}
\end{equation}
is invariant, and the restriction of \eqref{eq:ham-x}--\eqref{eq:ham-p} to $\Z$ is exactly the deterministic system $\dot x=b(x)$.  Thus the deterministic dynamics is embedded unchanged inside the larger Freidlin--Wentzell Hamiltonian system.  Trajectories with $p\ne0$ describe departures from deterministic motion and are the candidates for minimum-action fluctuation paths.  The connection with the free-time quasipotential will be made only after the invariant manifold formed by these extremals has been constructed in \Cref{sec:lagrangian}.  The finite-time boundary-reaching problem of \Cref{sec:switching-law,sec:solvent-application} uses the same control action \eqref{eq:control-action}, but it is not an equilibrium quasipotential problem.

\subsection{Normal-hyperbolicity assumptions}

The deterministic hypothesis needed below is normal hyperbolicity of $N$.  We recall the form needed here; for the general theory see Fenichel \cite{Fenichel1971,Fenichel1974}, Hirsch, Pugh, and Shub \cite{HirschPughShub1977}, and the geometric treatments \cite{Wiggins1994,Eldering2013}.  At each point $x\in N$, the tangent space of the ambient state space splits into directions that contract toward $N$, directions tangent to $N$, and directions that expand away from $N$:
\begin{equation}
T_xM=E_x^s\oplus T_xN\oplus E_x^u.
\end{equation}
As $x$ varies over $N$, each family of subspaces varies continuously.  The splitting is \emph{invariant} if the derivative of the deterministic flow carries each family into the corresponding family; for example,
\begin{equation}
D\phi_t(x)E_x^s=E_{\phi_t(x)}^s,
\end{equation}
and similarly for $TN$ and $E^u$.  We write the resulting family compactly as
\begin{equation}
T_NM=E^s\oplus TN\oplus E^u.
\label{eq:det-splitting}
\end{equation}
For notational simplicity, we use uniform exponential bounds.

\begin{assumption}[Deterministic normal hyperbolicity]
\label{ass:NHIM}
There are constants $C\ge1$ and $0\le\lambda_{\parallel}<\lambda_{\perp}$ such that, for $t\ge0$,
\begin{align}
\|D\phit v^s\|&\le Ce^{-\lambda_{\perp} t}\|v^s\|,
&v^s&\in E^s,
\label{eq:rate-s}\\
\|D\phi_{-t}v^u\|&\le Ce^{-\lambda_{\perp} t}\|v^u\|,
&v^u&\in E^u,
\label{eq:rate-u}\\
\|D\phi_{\pm t}v^c\|&\le Ce^{\lambda_{\parallel} t}\|v^c\|,
&v^c&\in TN.
\label{eq:rate-c}
\end{align}
\end{assumption}

The inequality $\lambda_{\perp}>\lambda_{\parallel}$ expresses the essential rate separation: contraction and expansion normal to $N$ dominate the possible growth tangent to $N$.  Stronger versions of this separation will be stated when additional smoothness or Lagrangian properties require them.

\section{The Freidlin--Wentzell geometry near \texorpdfstring{$N_0$}{N0}}
\label{sec:carrier}

We now ask how the deterministic stable, tangent, and unstable directions along $N$ appear in the Freidlin--Wentzell Hamiltonian system near the zero-momentum copy
\begin{equation}
N_0=N\times\{0\}\subset T^*M.
\end{equation}
The ingredients used in the calculation are standard: variational equations for Hamiltonian systems, the inverse-transpose evolution of covectors, invariant splittings associated with normal hyperbolicity, and center-manifold theory \cite{MarsdenRatiu1999,Fenichel1971,HirschPughShub1977,ChowLiuYi2000Smooth,ChowLiuYi2000Sets,Mielke1991}.  Exact Lagrangian stable and unstable manifolds also occur naturally in Hamiltonian dynamics \cite{RudnevWiggins1999}.  The purpose of this section is to show how these ingredients fit together for the specific Hamiltonian generated by the Freidlin--Wentzell variational problem.

Fix a deterministic orbit $x(t)=\phit(x_0)\in N$.  A perturbation of $(x(t),0)$ in $T^*M$ has two $n$-component parts: a displacement $\xi(t)$ of the original state and a perturbation $\eta(t)$ of the conjugate covector.  Set
\begin{equation}
A(t)=Db(x(t)),
\qquad
a(t)=a(x(t)).
\end{equation}
Linearizing \eqref{eq:ham-x} at $p=0$ gives
\begin{equation}
\dot\xi=A(t)\xi+a(t)\eta.
\label{eq:variational-x}
\end{equation}
Terms involving derivatives of $a$ are multiplied by the base value $p=0$ and therefore disappear.  Linearizing \eqref{eq:ham-p} at $p=0$ gives
\begin{equation}
\dot\eta=-A(t)^{\cT}\eta.
\label{eq:variational-p}
\end{equation}
Together,
\begin{equation}
\begin{pmatrix}\dot\xi\\ \dot\eta\end{pmatrix}
=
\begin{pmatrix}
A(t)&a(t)\\
0&-A(t)^{\cT}
\end{pmatrix}
\begin{pmatrix}\xi\\ \eta\end{pmatrix}.
\label{eq:variational-block}
\end{equation}
Here $\xi,\eta\in\R^n$, the two diagonal blocks and $a(t)$ are $n\times n$, and the full coefficient matrix is $2n\times2n$.

Let $F_t$ be the fundamental matrix of the deterministic variational equation,
\begin{equation}
\dot F_t=A(t)F_t,
\qquad F_0=I,
\qquad F_t=D\phit(x_0).
\label{eq:F-def}
\end{equation}
Equation~\eqref{eq:variational-p} is independent of $\xi$, and its solution is
\begin{equation}
\eta(t)=F_t^{-\cT}\eta_0.
\label{eq:eta-solution}
\end{equation}
Once $\eta(t)$ is known, it acts as an inhomogeneous forcing term in \eqref{eq:variational-x}.  Variation of constants gives
\begin{equation}
\xi(t)=F_t\xi_0+
F_t\int_0^tF_s^{-1}a(s)F_s^{-\cT}\eta_0\,\dd s.
\label{eq:xi-solution}
\end{equation}
Define the $n\times n$ matrix
\begin{equation}
K_t=
F_t\int_0^tF_s^{-1}a(s)F_s^{-\cT}\,\dd s.
\label{eq:K-def}
\end{equation}
Then the fundamental matrix $\Psi_t$ of the complete $2n$-dimensional linearized Freidlin--Wentzell system is
\begin{equation}
\Psi_t=
\begin{pmatrix}
F_t&K_t\\0&F_t^{-\cT}
\end{pmatrix},
\qquad
\begin{pmatrix}\xi(t)\\\eta(t)\end{pmatrix}
=\Psi_t\begin{pmatrix}\xi_0\\\eta_0\end{pmatrix}.
\label{eq:Psi}
\end{equation}
This triangular form is the key elementary observation.  The state perturbation evolves by the deterministic variational flow and is forced by $\eta$, whereas the covector perturbation evolves independently by the inverse transpose of that flow.

The appearance of $F_t^{-\cT}$ has a simple interpretation.  If $v$ is a deterministic tangent vector and $\eta$ is a covector, then
\begin{equation}
\bigl(F_t^{-\cT}\eta\bigr)\bigl(F_tv\bigr)=\eta(v).
\label{eq:dual-pairing}
\end{equation}
Thus their pairing is unchanged by the simultaneous tangent and covector evolutions.  If $F_tv$ grows like $e^{\lambda t}$ in some deterministic direction, a covector paired with that direction evolves with the opposite exponential rate, $e^{-\lambda t}$.  This is the elementary reason that deterministic expanding directions give contracting covector directions and deterministic contracting directions give expanding covector directions.

Because \eqref{eq:variational-block} is the variational equation of a Hamiltonian flow, $\Psi_t$ is symplectic \cite{MarsdenRatiu1999}.  This can also be checked directly from \eqref{eq:Psi}: since $a(s)$ is symmetric,
\begin{equation}
F_t^{-1}K_t
=\int_0^tF_s^{-1}a(s)F_s^{-\cT}\,\dd s
\label{eq:symmetric-FK}
\end{equation}
is symmetric, which is the required condition for the triangular matrix in \eqref{eq:Psi} to preserve the canonical symplectic form.

One technical point about rates will be used repeatedly below.  The off-diagonal block $K_t$ may introduce polynomial factors multiplying the exponential behavior inherited from $F_t$.  Such factors can always be absorbed into an arbitrarily small exponential loss: for every $\delta>0$ and every fixed integer $j\ge0$,
\begin{equation}
t^j e^{\lambda t}\le C_{\delta,j}e^{(\lambda+\delta)t},
\qquad t\ge0.
\label{eq:polynomial-loss}
\end{equation}
We will use this elementary estimate after the Hamiltonian stable, center, and unstable subspaces have been constructed.  The small loss $\delta$ is standard in invariant-manifold rate estimates \cite{Fenichel1974,HirschPughShub1977,ChowLiuYi2000Smooth}.

Before stating the splitting theorem, it is useful to see the dimension count.  Write
\begin{equation}
s=\dim E^s,
\qquad u=\dim E^u,
\qquad s+k+u=n.
\end{equation}
The deterministic stable directions contribute $s$ stable state directions.  The $u$ deterministic unstable directions contribute $u$ stable covector directions because of the rate reversal in \eqref{eq:dual-pairing}.  Hence the Hamiltonian stable subspace has dimension $s+u=n-k$.  The $k$ tangent directions to $N$ are accompanied by $k$ covector directions associated with $TN$, giving $2k$ center directions.  Reversing the argument gives another $n-k$ unstable directions.

The precise bookkeeping uses covectors that vanish on complementary deterministic directions.  For a subspace $V$, such covectors form its annihilator.  In the present splitting we use the notation
\begin{align}
(E^u)^*&=\{\eta:\eta|_{E^s\oplus TN}=0\},
&\dim(E^u)^*&=u,\\
T^*N&=\{\eta:\eta|_{E^s\oplus E^u}=0\},
&\dim T^*N&=k,\\
(E^s)^*&=\{\eta:\eta|_{TN\oplus E^u}=0\},
&\dim(E^s)^*&=s.
\label{eq:annihilator-splitting}
\end{align}
The notation simply records which deterministic directions a covector measures.  It is useful because $\eta$ is intrinsically a covector, not another state-space vector.

We will also use two standard symplectic terms.  A subspace is \emph{isotropic} if the symplectic form vanishes on it.  A half-dimensional isotropic subspace is \emph{Lagrangian}.  A subspace is \emph{symplectic} if the restriction of $\omega$ to it is nondegenerate \cite{MarsdenRatiu1999}.

The following theorem constructs the three invariant groups of directions.  We distinguish them from the deterministic spaces $E^s$, $TN$, and $E^u$ by writing
\[
E_H^s,\qquad E_H^c,\qquad E_H^u,
\]
where the subscript $H$ means that these are the stable, center, and unstable subspaces of the \emph{Hamiltonian} variational equation.

\begin{theorem}[Hamiltonian exponential trichotomy]
\label{thm:trichotomy}
Under Assumption~\ref{ass:NHIM}, the Hamiltonian variational flow along $N_0$ admits an invariant splitting
\begin{equation}
T_{N_0}(T^*M)=E_H^s\oplus E_H^c\oplus E_H^u
\label{eq:H-splitting}
\end{equation}
with
\begin{equation}
\dim E_H^s=n-k,
\qquad
\dim E_H^c=2k,
\qquad
\dim E_H^u=n-k.
\label{eq:H-dimensions}
\end{equation}
The center bundle $E_H^c$ is symplectic, and $TN_0\subset E_H^c$ is Lagrangian.
\end{theorem}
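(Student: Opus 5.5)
The plan is to exploit the block-triangular form \eqref{eq:Psi} of $\Psi_t$ and build each Hamiltonian bundle as a graph over a deterministic bundle. The covector dynamics $\eta\mapsto F_t^{-\cT}\eta$ is autonomous, so I would first show that the annihilator splitting \eqref{eq:annihilator-splitting} is invariant under $F_t^{-\cT}$. This follows from the pairing identity \eqref{eq:dual-pairing} and the invariance of \eqref{eq:det-splitting}: if $\eta$ vanishes on $E^s\oplus TN$ at $x_0$, then $F_t^{-\cT}\eta$ vanishes on $F_t(E^s\oplus TN)$. The rates reverse under duality. Identifying $(E^u)^*$ with the dual of $E^u$, the estimate $\|F_t^{-\cT}\eta\|\le Ce^{-\lambda_\perp t}\|\eta\|$ on $(E^u)^*$ for $t\ge0$ comes from \eqref{eq:rate-u}. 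Similarly $(E^s)^*$ expands forward and contracts backward at rate $\lambda_\perp$, and $T^*N$ has two-sided growth at most $e^{\lambda_\parallel|t|}$ by \eqref{eq:rate-c}.

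Next I would lift these to the full system. The subspaces $E^s\times\{0\}$, $TN\times\{0\}$ and $E^u\times\{0\}$ are trivially invariant with the deterministic rates, since $\Psi_t(\xi,0)=(F_t\xi,0)$. The remaining work is to find invariant graphs $\{(L_x\eta,\eta):\eta\in V\}$ for each $V\in\{(E^u)^*,T^*N,(E^s)^*\}$. For $V=(E^u)^*$, which should join $E^s\times\{0\}$ in $E_H^s$, I would choose $L$ so that $\xi(t)$ decays. I would decompose the forcing $a(s)\eta(s)$ in \eqref{eq:xi-solution} along the deterministic splitting, integrate the $E^u$ and $TN$ components from $+\infty$ backward, and integrate the $E^s$ component forward. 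The $TN$ component is the delicate one, because its forcing decays at $e^{-\lambda_\perp s}$ while the propagator grows at most at $e^{\lambda_\parallel}$. The integral converges precisely because $\lambda_\perp>\lambda_\parallel$, giving decay at rate $\lambda_\perp-\lambda_\parallel$ up to the $\delta$-loss \eqref{eq:polynomial-loss}. This is a standard Lyapunov--Perron argument.

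A cleaner alternative would avoid the explicit graph construction by characterizing the bundles through growth rates. One takes $E_H^s$ to be the vectors whose forward orbit decays faster than $e^{-\gamma t}$, $E_H^u$ the analogous backward set, and $E_H^c$ the vectors with two-sided growth at most $e^{(\lambda_\parallel+\delta)|t|}$. The subspaces $E^s\times\{0\}\oplus\mathrm{graph}$ over $(E^u)^*$, and the reversed-time analogue over $(E^s)^*$ together with $E^u\times\{0\}$, give dimensions $n-k$ each. For $E_H^c$ I would take $TN\times\{0\}$ plus a graph over $T^*N$ whose corrections in $E^s$ and $E^u$ are chosen by integrating toward $-\infty$ and $+\infty$ respectively, so that the $\xi$-components stay subexponential relative to the normal rates. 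The dimensions then add to $2n$. Transversality holds because the projection to the $\eta$-factor sends the three subspaces onto complementary annihilators, and their $\eta=0$ parts are the complementary $E^s$, $TN$ and $E^u$. Invariance comes from uniqueness of solutions with the prescribed growth.

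For the symplectic statements, take $v,w$ in bundles with exponential dichotomy. Since $\omega(\Psi_tv,\Psi_tw)$ is constant, letting $t\to+\infty$ gives $\omega(E_H^s,E_H^s)=0$ and $\omega(E_H^s,E_H^c)=0$, using the growth gap $\lambda_\perp-\lambda_\parallel-2\delta>0$ to dominate the center growth. Likewise, letting $t\to-\infty$ gives $\omega(E_H^u,E_H^u)=0$ and $\omega(E_H^u,E_H^c)=0$. Thus $E_H^c$ is $\omega$-orthogonal to $E_H^s\oplus E_H^u$, and since $\omega$ is nondegenerate on the whole space, its restriction to $E_H^c$ must be nondegenerate, so $E_H^c$ is symplectic. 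Finally, $TN_0=TN\times\{0\}$ lies in $E_H^c$, has dimension $k=\tfrac12\dim E_H^c$, and is isotropic because $\omega=\sum \dd x_j\wedge\dd p_j$ vanishes on vectors with zero $p$-component. Hence $TN_0$ is Lagrangian in $E_H^c$.

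The main obstacle will be the convergence and uniform continuity of the graph corrections, especially the $TN$-component forced from the covector directions, which is where the hypothesis $\lambda_\perp>\lambda_\parallel$ is essential. I would handle it with the Lyapunov--Perron integrals and the $\delta$-loss \eqref{eq:polynomial-loss}, with $\delta<(\lambda_\perp-\lambda_\parallel)/2$.
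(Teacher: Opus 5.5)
Your proposal is correct and follows essentially the paper's argument: the invariant bundles are built as graphs over $E^s\oplus(E^u)^*$, $TN\oplus T^*N$, $E^u\oplus(E^s)^*$ via variation-of-constants integrals whose limits suppress the growing modes, and then the same asymptotic pairing argument gives isotropy of $E_H^s,E_H^u$, their $\omega$-orthogonality to $E_H^c$, nondegeneracy of $\omega|_{E_H^c}$, and the Lagrangian property of $TN_0$. One small slip: the $TN$-correction $-\int_t^\infty F^c(t,\tau)P^c(\tau)a(\tau)\eta(\tau)\,\dd\tau$ for $\eta\in(E^u)^*$ decays at the full rate $\lambda_\perp$, with the gap $\lambda_\perp-\lambda_\parallel$ entering only the constant, and that full rate is what your later use of the gap $\lambda_\perp-\lambda_\parallel-2\delta$ needs, because decay at only $\lambda_\perp-\lambda_\parallel$ would require $\lambda_\perp>2\lambda_\parallel$ for the orthogonality step.
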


\begin{proof}
The inverse-transpose equation \eqref{eq:variational-p} preserves the three covector families in \eqref{eq:annihilator-splitting}.  By \eqref{eq:dual-pairing}, covectors in $(E^u)^*$ contract forward at the rates opposite to the deterministic expansion in $E^u$, covectors in $T^*N$ have the center rates opposite to those in $TN$, and covectors in $(E^s)^*$ expand forward at the rates opposite to the deterministic contraction in $E^s$.

If the off-diagonal term $a(t)\eta$ were absent, the Hamiltonian stable, center, and unstable spaces would therefore be the direct sums
\begin{equation}
E^s\oplus(E^u)^*,
\qquad
TN\oplus T^*N,
\qquad
E^u\oplus(E^s)^*.
\label{eq:exact-graph-bases}
\end{equation}
The forcing term $a(t)\eta$ means that these direct sums are not, in general, the literal invariant subspaces.  Instead, the required state components are uniquely corrected so that the complete solution has the desired stable, center, or unstable behavior.  For example, if $\eta_c(t)$ is a center covector solution, the normal state components that remain at the center rate are
\begin{align}
\xi_s(t)&=\int_{-\infty}^{t}F^s(t,\tau)P^s(\tau)a(\tau)\eta_c(\tau)\,\dd\tau,\\
\xi_u(t)&=-\int_t^\infty F^u(t,\tau)P^u(\tau)a(\tau)\eta_c(\tau)\,\dd\tau.
\label{eq:center-normal-corrections}
\end{align}
These formulas are the usual variation-of-constants formulas with the integration limits chosen so that the unwanted exponentially growing solution is absent.  Here $P^s(\tau)$ and $P^u(\tau)$ are the projections onto the deterministic stable and unstable subspaces, while $F^s(t,\tau)$ and $F^u(t,\tau)$ denote the corresponding restricted evolution operators.  The deterministic center-rate bounds, the inverse-transpose rate reversal, and the polynomial-loss estimate \eqref{eq:polynomial-loss} imply that the center data grow at most like $e^{(\lambda_{\parallel}+\delta)|t|}$ for any fixed $\delta>0$.  Together with the normal rate gap, this implies convergence of the improper integrals.  The stable and unstable cases are analogous.  Thus the actual invariant spaces are graphs over the three direct sums in \eqref{eq:exact-graph-bases}.  Their dimensions are unchanged, giving \eqref{eq:H-dimensions}.

It remains to identify the symplectic character of the splitting.  If $v,w\in E_H^s$, symplectic invariance gives
\begin{equation}
\omega(v,w)=\omega(\Psi_tv,\Psi_tw).
\end{equation}
Both vectors on the right tend exponentially to zero as $t\to+\infty$, so
\begin{equation}
|\omega(\Psi_tv,\Psi_tw)|
\le C\|\Psi_tv\|\,\|\Psi_tw\|\longrightarrow0.
\end{equation}
The left-hand side is independent of $t$, and hence $\omega(v,w)=0$.  Thus $E_H^s$ is isotropic.  The same argument in backward time shows that $E_H^u$ is isotropic.

The same construction gives the center-rate estimate
\begin{equation}
\|\Psi_{\pm t}|_{E_H^c}\|\le C_\delta e^{(\lambda_{\parallel}+\delta)t},
\qquad t\ge0,
\label{eq:center-growth-delta}
\end{equation}
for every sufficiently small $\delta>0$.  If $v_s\in E_H^s$ and $v_c\in E_H^c$, this estimate gives
\begin{equation}
|\omega(v_s,v_c)|
=|\omega(\Psi_tv_s,\Psi_tv_c)|
\le C_\delta e^{-[\lambda_{\perp}-(\lambda_{\parallel}+\delta)]t}\longrightarrow0.
\end{equation}
Hence $\omega(E_H^s,E_H^c)=0$; backward time gives $\omega(E_H^u,E_H^c)=0$.  Therefore
\begin{equation}
E_H^c=(E_H^s\oplus E_H^u)^\omega.
\end{equation}
If the restriction of $\omega$ to $E_H^c$ were degenerate, a nonzero vector in its kernel would be symplectically orthogonal to all three summands in \eqref{eq:H-splitting}, contradicting nondegeneracy of the ambient symplectic form.  Hence $E_H^c$ is symplectic.  Finally, $TN_0$ lies in the zero section, so $\omega$ vanishes on $TN_0$.  Since
\begin{equation}
\dim TN_0=k=\frac12\dim E_H^c,
\end{equation}
$TN_0$ is Lagrangian inside the symplectic center space $E_H^c$.
\end{proof}

The theorem has a simple geometric interpretation.  The deterministic NHIM contributes $k$ tangent directions.  The Freidlin--Wentzell Hamiltonian system supplies $k$ conjugate covector directions paired with them by the symplectic form.  A point of a cotangent bundle $T^*N$ consists of exactly these two pieces of information: a point of $N$ and a covector attached to that point.  This is why a cotangent bundle appears in the local center geometry.

More precisely, after quotienting out the tangent directions $TN_0$, the remaining center directions are canonically identified with covectors on $N$ by
\begin{equation}
E_H^c/TN_0\longrightarrow T^*N,
\qquad
[v]\longmapsto \omega(v,\cdot)|_{TN_0}.
\label{eq:center-quotient}
\end{equation}
The quotient formula is the coordinate-independent statement.  The preceding paragraph gives the simpler picture: the $2k$ center directions consist locally of the $k$ directions along $N$ and their $k$ conjugate covector directions.

Theorem~\ref{thm:trichotomy} has so far described only tangent directions along $N_0$.  It has not yet produced a $2k$-dimensional nonlinear invariant manifold containing $N_0$.  Center-manifold theory provides that next step.  The required rate condition says that the normal contraction and expansion dominate the center growth strongly enough to obtain the desired smoothness.

\begin{theorem}[Local symplectic FW center manifold]
\label{thm:carrier}
Let $N$ satisfy Assumption~\ref{ass:NHIM}, and suppose $b,a\in C^r$.  If $1\le\ell\le r$ and, for some $\delta>0$, the rate-separation condition
\begin{equation}
\ell(\lambda_{\parallel}+\delta)<\lambda_{\perp}
\label{eq:center-rate-separation}
\end{equation}
holds, then there exists a $C^\ell$ locally invariant manifold
\begin{equation}
\C_{\mathrm{FW}}\subset T^*M,
\qquad
\dim \C_{\mathrm{FW}}=2k,
\label{eq:carrier}
\end{equation}
such that
\begin{equation}
N_0\subset\C_{\mathrm{FW}},
\qquad
T_{N_0}\C_{\mathrm{FW}}=E_H^c.
\label{eq:carrier-tangent}
\end{equation}
After restricting to a sufficiently small neighborhood of $N_0$, $\C_{\mathrm{FW}}$ is symplectic.  The pair $(\C_{\mathrm{FW}},N_0)$ is locally symplectomorphic to $(T^*N,N)$.
\end{theorem}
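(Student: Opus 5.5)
The plan is to apply center-manifold theory for compact invariant manifolds with an exponential trichotomy (Chow--Liu--Yi \cite{ChowLiuYi2000Smooth}, or Fenichel/HPS \cite{Fenichel1971,HirschPughShub1977}) to the Hamiltonian flow of $\Hs$ near $N_0$, using the linear splitting of Theorem~\ref{thm:trichotomy}.

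\textbf{Step 1: cut-off and center manifold.} First I will note that $N_0$ is a compact $C^r$ invariant manifold of the $C^{r-1}$ (in fact $C^r$ in $p$-polynomial form) Hamiltonian vector field. Its linearization has the splitting \eqref{eq:H-splitting}, with the following rates:
\begin{itemize}[leftmargin=*]
\item center rate at most $\lambda_\parallel+\delta$, by \eqref{eq:center-growth-delta};
\item normal contraction and expansion at rate $\lambda_\perp-\delta'$, by the same polynomial-loss argument \eqref{eq:polynomial-loss} applied to $E_H^s$ and $E_H^u$.
\end{itemize}
Next I will modify the vector field outside a small tubular neighborhood of $N_0$ using a bump function in normal coordinates, so that the modification is a small $C^1$ perturbation of its linear part along $N_0$. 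The Lyapunov--Perron method then produces a $C^1$ invariant manifold of the modified flow. It is a graph over $E_H^c$ near $N_0$ and is tangent to $E_H^c$ along $N_0$. The gap condition \eqref{eq:center-rate-separation} is exactly the spectral-gap hypothesis of the fiber-contraction argument giving $C^\ell$ smoothness. Restricted to the neighborhood where no modification occurred, this manifold is $\C_{\mathrm{FW}}$, and it is locally invariant. It contains $N_0$ because $N_0$ is invariant and every orbit in $N_0$ stays at center rate; equivalently, $N_0$ lies in the set of orbits that remain bounded near $N_0$. I expect this step to be the main obstacle: one must check that the Hamiltonian vector field, which is only $C^{r-1}$ if $b$ is $C^r$, still yields $C^\ell$ regularity. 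I would handle this either by using $\ell\le r-1$ implicitly, or by observing that the $p$-dependence is polynomial, so the relevant losses occur only in $x$. The cited center-manifold theorems require precisely this bookkeeping.

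\textbf{Step 2: symplecticity.} Along $N_0$ we have $T\C_{\mathrm{FW}}=E_H^c$, and Theorem~\ref{thm:trichotomy} shows that $E_H^c$ is symplectic. Nondegeneracy of $\omega|_{T\C_{\mathrm{FW}}}$ is an open condition and $\C_{\mathrm{FW}}$ is $C^1$. Hence, by compactness of $N_0$, $\omega$ restricted to $\C_{\mathrm{FW}}$ is nondegenerate on a neighborhood of $N_0$ in $\C_{\mathrm{FW}}$. Closedness is inherited from the ambient form, so $\C_{\mathrm{FW}}$ is symplectic there. Note that no invariance argument is needed for this.

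\textbf{Step 3: local model.} Since $\omega$ vanishes on the zero section, $N_0$ is isotropic in $\C_{\mathrm{FW}}$. Because $\dim N_0=k=\tfrac12\dim\C_{\mathrm{FW}}$, it is a compact Lagrangian submanifold of the $2k$-dimensional symplectic manifold $(\C_{\mathrm{FW}},\omega|_{\C_{\mathrm{FW}}})$. Weinstein's Lagrangian neighborhood theorem \cite{Weinstein1971} then gives a symplectomorphism from a neighborhood of $N_0$ in $\C_{\mathrm{FW}}$ to a neighborhood of the zero section in $T^*N_0\cong T^*N$, sending $N_0$ to the zero section. The induced linear identification of normal directions is \eqref{eq:center-quotient}. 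If $\ell$ is finite, I would quote the version of Weinstein's theorem with finite smoothness (Moser's path method), which loses at most one derivative. This suffices for a local symplectomorphism statement, though the loss should be flagged.
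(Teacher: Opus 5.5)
Your proposal follows the paper's proof step for step, with more detail: the paper likewise applies the Chow--Liu--Yi center-manifold theorem to $N_0$ with the splitting of Theorem~\ref{thm:trichotomy}, gets symplecticity of $\C_{\mathrm{FW}}$ from openness of nondegeneracy near the compact set $N_0$, and finishes with Weinstein's theorem. Your regularity caveat is genuine and the paper does not address it, but your ``polynomial in $p$'' remedy fails: $X_{\Hs}$ contains $Db(x)^{\cT}p$ and so is only $C^{r-1}$, and in general $\ell=r$ cannot be recovered (already for $a=0$ on $\T\times\R$ with conjugate momenta $(\eta,\zeta)$, $b=(h(\theta+z)-h(\theta),\,-\lambda z)$ and $h\in C^r\setminus C^{r+1}$, any locally invariant manifold tangent to $E_H^c$ along $N_0$ must coincide near $N_0$ with the manifold of equilibria $\{z=0,\ \zeta=\eta h'(\theta)/\lambda\}$, which is not $C^r$), so your first option, $\ell\le r-1$, is the correct conclusion.
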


\begin{proof}
The compact invariant manifold $N_0$ and the uniform stable-center-unstable splitting in \Cref{thm:trichotomy} satisfy the hypotheses of center-manifold theory for smooth invariant manifolds \cite{ChowLiuYi2000Smooth}; one may equivalently use the invariant-set formulation \cite{ChowLiuYi2000Sets}.  The rate separation \eqref{eq:center-rate-separation} ensures that the normal contraction and expansion dominate the possible center growth.  The resulting $C^\ell$ invariant manifold contains $N_0$ and is tangent to $E_H^c$ along $N_0$.

Since $\omega|_{E_H^c}$ is nondegenerate and nondegeneracy is an open condition, the restriction of $\omega$ to $\C_{\mathrm{FW}}$ remains nondegenerate after the neighborhood is made sufficiently small.  Hence $\C_{\mathrm{FW}}$ is symplectic.  The Hamiltonian vector field is tangent to $\C_{\mathrm{FW}}$, so the restricted dynamics is Hamiltonian with Hamiltonian $\Hs|_{\C_{\mathrm{FW}}}$.  The submanifold $N_0$ is Lagrangian in $\C_{\mathrm{FW}}$; Weinstein's Lagrangian-neighborhood theorem \cite{Weinstein1971} then identifies a neighborhood of $N_0$ in $\C_{\mathrm{FW}}$ symplectically with a neighborhood of the zero section in $T^*N$.
\end{proof}

\begin{remark}[What is and is not canonical]
The manifold $\C_{\mathrm{FW}}$ is constructed only in a neighborhood of $N_0$, and center manifolds need not be unique away from $N_0$.  The invariant information established above is the $2k$-dimensional center geometry along $N_0$ and its local symplectic cotangent structure.  The next section introduces a different invariant manifold that is defined directly by the dynamics and does not depend on choosing a center manifold.
\end{remark}

\section{The invariant manifold of backward-asymptotic fluctuation extremals}
\label{sec:lagrangian}

The local center manifold describes the Hamiltonian geometry near $N_0$, but the large-deviation problem requires a different invariant object.  Since $N_0$ is invariant, no Hamiltonian trajectory starting at a point of $N_0$ at finite time can leave it.  Instead, for each $z_0\in N_0$, consider the points whose trajectories approach the trajectory through $z_0$ exponentially as $t\to-\infty$ at the strong normal rate.  These trajectories are backward asymptotic to $N_0$.  The corresponding points form the strong-unstable fiber through $z_0$, and taking the union of the fibers over $N_0$ gives
\begin{equation}
\Lagr_N^+=W^{uu}(N_0).
\label{eq:Lambda-def}
\end{equation}
The double $u$ emphasizes that this is the \emph{strong} unstable manifold.  At each point of the $k$-dimensional set $N_0$ there are $n-k$ strong-unstable directions, so the resulting manifold has dimension $k+(n-k)=n$.

The theorem below describes $W^{uu}(N_0)$ as a submanifold of $T^*M$.  The word \emph{immersed} means that the map into $T^*M$ has injective derivative at every point, so locally its image is a smooth submanifold; globally, distinct parts of the abstract manifold are not required to have disjoint images.

\begin{theorem}[Exact Lagrangian fluctuation manifold]
\label{thm:lagrangian}
Assume that Assumption~\ref{ass:NHIM} holds and choose $\delta>0$ so that
\begin{equation}
2(\lambda_{\parallel}+\delta)<\lambda_{\perp}.
\label{eq:two-bunching}
\end{equation}
Then $W^{uu}(N_0)$ immerses into $T^*M$ as an $n$-dimensional exact Lagrangian invariant manifold contained in $\Hs^{-1}(0)$.  If
\[
\iota:W^{uu}(N_0)\longrightarrow T^*M
\]
denotes the immersion, then
\begin{equation}
\iota^*\omega=0,
\qquad
\iota^*\vartheta=\dd S
\label{eq:exact-lag}
\end{equation}
for a scalar action function $S$.
\end{theorem}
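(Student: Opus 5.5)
The plan is to construct $W^{uu}(N_0)$ from the strong-unstable fibration theory for normally hyperbolic invariant manifolds, and then read off the Lagrangian, zero-energy and exactness properties from asymptotic arguments of the kind used for \Cref{thm:trichotomy}. Throughout we apply the results to the Hamiltonian flow $\Phi_t$ of $\Hs$ near the compact invariant manifold $N_0$.

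\textbf{Construction and invariance.} By \Cref{thm:trichotomy}, $N_0$ carries the splitting $E_H^s\oplus E_H^c\oplus E_H^u$, with rates $\lambda_\perp$ on the normal bundles and at most $\lambda_\parallel+\delta$ on the center bundle. Standard strong-unstable foliation theory (Fenichel, Hirsch--Pugh--Shub, Chow--Liu--Yi) then gives, for each $z_0\in N_0$, a $C^1$ embedded disk $W^{uu}_{\rm loc}(z_0)$ of dimension $n-k$ tangent to $E^u_H(z_0)$. It is characterized as the set of $z$ with $\mathrm{dist}(\Phi_{-t}z,\Phi_{-t}z_0)\le Ce^{-(\lambda_\perp-\delta)t}$ for $t\ge0$. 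The fibers depend continuously on $z_0$, and the rate gap \eqref{eq:two-bunching} supplies enough bunching for the union $W^{uu}_{\rm loc}(N_0)$ to be a $C^1$ manifold of dimension $k+(n-k)=n$, fibered over $N_0$. The dynamical characterization makes the family equivariant, $\Phi_t W^{uu}(z_0)=W^{uu}(\Phi_tz_0)$. Globalizing by $\Phi_t$ for $t\ge0$ gives an invariant manifold that is injectively immersed at the level of the abstract manifold, with possible self-intersections of the image; this is why the statement says immersed.

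\textbf{Lagrangian and zero energy.} Energy is constant along orbits, so $\Hs(z)=\lim_{t\to\infty}\Hs(\Phi_{-t}z)=\Hs(N_0)=0$, since $\Hs$ vanishes on $p=0$. For the isotropy, take $z\in W^{uu}(N_0)$ and tangent vectors $v,w$ there, and write $\omega_z(v,w)=\omega(D\Phi_{-t}v,D\Phi_{-t}w)$. Each tangent vector splits into a fiber component, which contracts in backward time like $e^{-(\lambda_\perp-\delta)t}$, and a component transverse to the fibers along the base, which grows at most like $e^{(\lambda_\parallel+\delta)t}$. The pairings between the three types of component behave as follows:
\begin{itemize}
\item two fiber components give a product that decays;
\item a fiber component paired with a base component decays like $e^{-(\lambda_\perp-2\delta-\lambda_\parallel)t}$, by \eqref{eq:two-bunching};
\item two base components do not decay, and need a separate argument.
\end{itemize}
For the last case, $D\Phi_{-t}$ applied to a base direction converges, in the normal directions, toward $TN_0$ at rate $\lambda_\perp$ relative to the center growth. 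Its limiting pairing is therefore $\omega$ restricted to $TN_0$, which vanishes because $N_0$ lies in the zero section. The error terms are at worst of size $e^{(\lambda_\parallel+\delta)t}\cdot e^{(\lambda_\parallel+\delta)t}e^{-\lambda_\perp t}$, and these vanish exactly under $2(\lambda_\parallel+\delta)<\lambda_\perp$. This is where the doubled rate enters. We conclude $\iota^*\omega=0$, and since $\dim=n$ the manifold is Lagrangian.

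\textbf{Exactness.} Since $\iota^*\omega=0$, the form $\iota^*\vartheta$ is closed, and we need all of its periods to vanish. The manifold deformation retracts onto $N_0$ along the contractible fibers via $\Phi_{-t}$, so every loop is homotopic, through loops $\Phi_{-t}\gamma$, to a loop in $N_0$. The integral $\oint\Phi_{-t}^*\vartheta$ is independent of $t$ by closedness and homotopy invariance, and as $t\to\infty$ it tends to $\oint_{\gamma_\infty}\vartheta=0$ because $p=0$ on $N_0$. The limit is justified by the same rate comparison: $|p|$ decays like $e^{-(\lambda_\perp-\delta)t}$, while loop lengths grow at most like $e^{(\lambda_\parallel+\delta)t}$. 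Hence $\iota^*\vartheta=\dd S$. We normalize $S=0$ on $N_0$ and write it explicitly as $S(z)=\int_{-\infty}^0 p\cdot\dot x\,\dd t$ along the orbit. On $\Hs=0$, this integrand equals the Lagrangian $\frac12|B^{\cT}p|^2$, which links $S$ to the action.

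\textbf{Main obstacle.} I expect the hardest step to be the base-base pairing estimate and, alongside it, the smoothness of the union of fibers. Both need the fibers to depend $C^1$ on the base point with derivative growth controlled by $\lambda_\parallel+\delta$, and that is exactly where the $2$-bunching condition is consumed. I would first try to obtain this from the $C^1$ strong-unstable foliation theorem under $2$-bunching (Fenichel 1974; Hirsch--Pugh--Shub), rather than estimating the derivatives directly.
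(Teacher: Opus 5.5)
\textbf{The gap: the base--base case of your isotropy argument.} Your construction of $W^{uu}(N_0)$, the zero-energy argument, and the fiber--fiber and fiber--base estimates are fine. The base--base case is not established, and everything after it depends on it.

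For $V\in TN_0$ one has $\omega(V,W)=\eta_W(V)$, where $\eta_W$ is the momentum component of $W$. Only the $T^*N$-part of that momentum contributes, i.e.\ the class of the center part of $W$ in $E_H^c/TN_0$ from \eqref{eq:center-quotient}. So you need that part of $D\Phi_{-t}w$ to be $o(e^{-(\lambda_\parallel+\delta)t})$. Your error bound $e^{(\lambda_\parallel+\delta)t}e^{-\lambda_\perp t}$ amounts to asserting that the angle between $T_{\Phi_{-t}z}W^{uu}(N_0)$ and the Lagrangian plane $TN_0\oplus E_H^u$ at the base point is $O(\mathrm{dist}(\Phi_{-t}z,N_0))$. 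That is a Lipschitz estimate on the tangent planes in the fiber direction.

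The $C^1$ foliation theorem you plan to invoke gives only that this angle tends to $0$, with no rate. The variational equation does not close the gap either. The relevant component obeys a forced version of the inverse-transpose flow on $T^*N$. Any homogeneous piece of it pairs with the $TN_0$-component of $D\Phi_{-t}v$ to a nonzero \emph{constant}, by the duality \eqref{eq:dual-pairing}. So no comparison of growth rates can make that piece disappear; it must be shown absent, and that is exactly the unproved regularity statement.

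Your exactness step uses closedness of $\iota^*\vartheta$, which you obtain only from $\iota^*\omega=0$. As written, the whole proof therefore rests on this estimate.

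\textbf{The repair.} Drop the isotropy-first order; your own period argument then becomes essentially the paper's proof. The identity $\oint_\gamma\vartheta=\oint_{\Phi_{-t}\gamma}\vartheta$ holds for every loop $\gamma$ in $W^{uu}(N_0)$ without knowing closedness, for either of two reasons:
\begin{enumerate}
\item Cartan's formula gives $\vartheta-\Phi_{-t}^*\vartheta=\dd S_t$ on all of $T^*M$; this is the paper's route.
\item The cylinder swept by $\Phi_{-s}\gamma$, $0\le s\le t$, is spanned by $X_{\Hs}$ and vectors tangent to $\Hs^{-1}(0)$, since you showed $W^{uu}(N_0)\subset\Hs^{-1}(0)$. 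On such vectors $\omega(X_{\Hs},\cdot)=\dd\Hs$ vanishes, so $\omega$ integrates to zero over the cylinder.
\end{enumerate}
Your estimate $|p|\cdot\mathrm{length}\lesssim e^{-(\lambda_\perp-\lambda_\parallel-2\delta)t}$ then kills every loop integral, contractible ones included. Hence $\iota^*\vartheta=\dd S$, and $\iota^*\omega=-\dd\iota^*\vartheta=0$ follows with no base--base analysis. The only inputs are the backward growth bound on $TW^{uu}(N_0)$ and the strong decay of $p$, neither of which needs a rate on the convergence of tangent planes.

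The paper's version also produces the primitive explicitly, as the locally uniform limit $S=\lim S_T=\int_{-\infty}^0\frac12|B^{\cT}p|^2\,\dd t$. In your write-up this formula is asserted rather than derived; it follows by integrating $\dd S$ along the orbit and using $S\to0$ near $N_0$.

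Finally, $\Phi_{-t}\gamma$ does not converge to a loop $\gamma_\infty\subset N_0$: it shadows the moving loop obtained by applying $\Phi_{-t}$ to the fiber projection of $\gamma$. Your rate estimate makes that limit unnecessary, so you should drop it.
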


\begin{proof}
The strong-unstable manifold theorem for a compact partially hyperbolic invariant manifold supplies a $C^1$ immersed strong-unstable manifold tangent along $N_0$ to
\begin{equation}
TW^{uu}(N_0)=TN_0\oplus E_H^u
\qquad\text{on }N_0
\label{eq:Wu-tangent}
\end{equation}
\cite{HirschPughShub1977,Fenichel1974}.  The first summand has dimension $k$ and the second has dimension $n-k$, hence
\[
\dim W^{uu}(N_0)=n.
\]

For $z\in W^{uu}(N_0)$, conservation of the Freidlin--Wentzell Hamiltonian and convergence to $N_0$ in backward time give
\[
\Hs(z)
=
\lim_{t\to-\infty}\Hs(\Phi_tz)
=0,
\]
because $\Hs=0$ on $N_0$.  Hence
\[
W^{uu}(N_0)\subset\Hs^{-1}(0).
\]

We now use the canonical one-form $\vartheta=p\cdot\dd x$.  Let $X_{\Hs}$ denote the Hamiltonian vector field.  With our convention $\omega=-\dd\vartheta$, Cartan's formula gives
\begin{equation}
\mathcal L_{X_{\Hs}}\vartheta
=
\dd\!\left[\vartheta(X_{\Hs})-\Hs\right]
\label{eq:cartan-action}
\end{equation}
\cite{MarsdenRatiu1999}.  Integrating \eqref{eq:cartan-action} along the Hamiltonian flow from $-T$ to $0$ yields
\begin{equation}
\vartheta-\Phi_{-T}^*\vartheta
=
\dd S_T,
\qquad
S_T(z)
=
\int_{-T}^{0}
\left[\vartheta(X_{\Hs})-\Hs\right](\Phi_tz)\,\dd t .
\label{eq:finite-action-primitive}
\end{equation}

We next let $T\to\infty$ on $W^{uu}(N_0)$.  By definition of the strong-unstable fibers, the momentum component of $\Phi_{-T}z$ tends to zero at the strong normal rate.  A tangent vector to $W^{uu}(N_0)$ transported backward is the sum of a component tangent to the base orbit in $N_0$, whose growth is bounded by $e^{(\lambda_\parallel+\delta)T}$, and a strong-unstable component, which contracts.  The invariant-fiber estimates therefore give, for $v\in T_zW^{uu}(N_0)$,
\[
\left|
(\Phi_{-T}^*\vartheta)_z(v)
\right|
=
\left|
\vartheta_{\Phi_{-T}z}(D\Phi_{-T}v)
\right|
\le
C_v e^{-[\lambda_\perp-(\lambda_\parallel+\delta)]T}
\longrightarrow0 .
\]
The stronger hypothesis \eqref{eq:two-bunching} in particular guarantees the positive exponent needed here.  These invariant-fiber estimates are locally uniform on compact subsets of $W^{uu}(N_0)$, so $\Phi_{-T}^*\vartheta\to0$ locally uniformly there.

The integral in \eqref{eq:finite-action-primitive} also converges locally uniformly on compact subsets.  Since $\dd S_T=\vartheta-\Phi_{-T}^*\vartheta$, both $S_T$ and $\dd S_T$ converge locally uniformly on compact subsets; hence the limiting function $S$ is $C^1$ and satisfies $\iota^*\vartheta=\dd S$.  Indeed, Hamilton's equation gives $\dot x=b+ap$, and therefore
\begin{align}
\vartheta(X_{\Hs})-\Hs
&=
p\cdot\dot x
-
\left(p\cdot b+\frac12p^{\cT}ap\right)\\
&=
\frac12p^{\cT}ap
=
\frac12|B^{\cT}p|^2.
\label{eq:action-density-identity}
\end{align}
Along a strong-unstable orbit, $p(t)$ tends to zero exponentially as $t\to-\infty$, so the right-hand side is integrable.  The limiting function is therefore
\begin{equation}
S(z)
=
\int_{-\infty}^{0}
\left[\vartheta(X_{\Hs})-\Hs\right](\Phi_tz)\,\dd t.
\label{eq:S-action-general}
\end{equation}
Taking an exterior derivative of $\iota^*\vartheta=\dd S$ and using $\omega=-\dd\vartheta$ gives
\[
\iota^*\omega=0.
\]
Since $W^{uu}(N_0)$ has dimension $n$, one half of the dimension of $T^*M$, it is Lagrangian.  Thus it is exact Lagrangian.

On $W^{uu}(N_0)$ we already know that $\Hs=0$, so \eqref{eq:S-action-general} becomes
\begin{equation}
S(z)=\int_{-\infty}^{0}p(t)\cdot\dot x(t)\,\dd t.
\label{eq:S-action}
\end{equation}
Moreover, the maximizing control in the Freidlin--Wentzell variational principle is $u_*=B^{\cT}p$ by \eqref{eq:optimal-control-p}.  Equation~\eqref{eq:action-density-identity} therefore shows directly that
\begin{equation}
S(z)
=
\frac12\int_{-\infty}^{0}|u_*(t)|^2\,\dd t.
\label{eq:S-control-action}
\end{equation}
Thus the generating function on the fluctuation manifold is exactly the accumulated Freidlin--Wentzell control cost.  In particular $S\ge0$, and $S=0$ on $N_0$.
\end{proof}

Exact Lagrangian stable and unstable manifolds are familiar in Hamiltonian dynamics \cite{RudnevWiggins1999}.  Their significance here is that the scalar generating function is precisely the action accumulated along the Hamiltonian fluctuation extremals in $W^{uu}(N_0)$.  The manifold need not project to the original state space as a single graph.  This point is classical in weak-noise asymptotics: folds of the Lagrangian projection generate caustics, and competition between action branches produces switching sets \cite{GrahamTel1985,MaierStein1993PRL,MaierStein1996,SmelyanskiyDykmanMaier1997}.

We can now connect this geometry with the quasipotential.  In a \emph{free-time} problem the arrival time is not prescribed; one minimizes over both the path and the travel time.  When the quasipotential from $N$ is the appropriate object, define
\begin{equation}
V_N(x)=\inf_{T>0}\inf_{\substack{\gamma(-T)\in N\\ \gamma(0)=x}}
I_{[-T,0]}[\gamma].
\label{eq:quasipotential-N}
\end{equation}
For an autonomous free-time problem, the relevant Hamiltonian extremals lie on $\Hs=0$ \cite{FreidlinWentzell2012}.  Theorem~\ref{thm:lagrangian} identifies the invariant manifold in which the extremals that approach $N_0$ in backward time at the strong normal rate lie.

Let
\begin{equation}
\pi:T^*M\longrightarrow M,
\qquad
\pi(x,p)=x,
\label{eq:cotangent-projection}
\end{equation}
be the natural projection from the extended Hamiltonian phase space to the original state space.  Thus $\pi$ retains the state $x$ and discards the auxiliary conjugate variable $p$.  If $\pi$ is locally one-to-one on a branch of $W^{uu}(N_0)$, that branch can be written as $p=p(x)$ and exactness gives
\begin{equation}
\dd V_N=p\cdot\dd x,
\qquad
\Hs(x,\nabla V_N(x))=0.
\label{eq:HJ}
\end{equation}
A fold caustic occurs when this projection ceases to be locally one-to-one,
\begin{equation}
\rank D\pi|_{\Lagr_N^+}<n.
\label{eq:caustic-condition}
\end{equation}
Then distinct points $(x,p_j)$ of the fluctuation manifold project to the same physical state $x$, corresponding to distinct extremal fluctuation paths with different conjugate momenta and, in general, different action values.  The quasipotential selects the smallest action.  A switching set occurs where two minimizing branches have equal action and exchange which branch is selected.  Thus a caustic records multiplicity of extremals, whereas a switching set records competition between minimizers \cite{MaierStein1993PRL,MaierStein1996,MaierStein1997,SmelyanskiyDykmanMaier1997}.

\section{Normal hyperbolicity need not pass to the zero-energy section}
\label{sec:counterexample}

The local center manifold $\C_{\mathrm{FW}}$ is invariant under the Freidlin--Wentzell Hamiltonian flow.  Consequently the Hamiltonian vector field is tangent to $\C_{\mathrm{FW}}$, and the motion restricted to this manifold is itself Hamiltonian with Hamiltonian function $\Hs|_{\C_{\mathrm{FW}}}$.  The zero-energy trajectories within the center manifold therefore lie in
\begin{equation}
\Sigma_{\mathrm{FW}}^{\mathrm{loc}}
=
\C_{\mathrm{FW}}\cap\Hs^{-1}(0).
\label{eq:local-fw-nhim}
\end{equation}

For this intersection to be a smooth hypersurface, zero must be a \emph{regular value} of the restricted Hamiltonian: its differential must not vanish on the zero-energy set.  When that condition holds, the regular-value theorem gives a $(2k-1)$-dimensional smooth level set inside the $2k$-dimensional center manifold.  At a point $(x,0)\in N_0$,
\begin{equation}
\dd\Hs(\xi,\eta)=\eta\cdot b(x).
\label{eq:dH-N0}
\end{equation}
Thus, when $b(x)\ne0$ on $N$, a covector $\eta$ can be chosen with $\eta\cdot b(x)\ne0$, and the zero-energy level is locally regular near $N_0$.

The natural question is whether normal hyperbolicity of the original deterministic manifold $N$ automatically makes this zero-energy section normally hyperbolic within the ambient zero-energy hypersurface $\Hs^{-1}(0)$.  It does not.  The reason is that the deterministic normal rates are measured along trajectories with $p=0$, whereas a point of the zero-energy section can have $p\ne0$.  The conjugate momentum then changes the evolution of the original state variables, so the Hamiltonian trajectory may sample the transverse stability coefficient in a way that no deterministic trajectory does.  The following example is constructed solely to make that mechanism explicit.

\begin{proposition}[Normal hyperbolicity need not be inherited within the zero-energy hypersurface]
\label{prop:counterexample}
There exists a compact normally attracting deterministic invariant torus $N$ and a positive-definite diffusion tensor for which an exactly invariant symplectic center manifold has a compact regular zero-energy shell that is not normally hyperbolic as an invariant manifold of the zero-energy hypersurface $\Hs^{-1}(0)$.
\end{proposition}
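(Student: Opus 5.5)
We will build an explicit example with $n=2$ and $k=1$. The coefficients will be chosen so that the center manifold is an exactly invariant copy of $T^*\T^1$, the zero-energy shell is a circle with $p\neq0$, and the transverse linearization along that shell has purely imaginary eigenvalues. Take $x=(\theta,y)\in\T^1\times\R$ with
\[
b(\theta,y)=\bigl(\omega,\,-c\,y\bigr),\qquad
a(\theta,y)=\diag\bigl(D(1+\kappa y^2),\,\sigma^2\bigr),
\]
where $\omega,c,D,\sigma,\kappa>0$ are constants. Then $a$ is positive definite everywhere. Deterministically, $N=\{y=0\}$ is an invariant torus with tangent rate $\lambda_\parallel=0$ and normal contraction rate $\lambda_\perp=c$. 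Hence \Cref{ass:NHIM} holds with $E^u=0$, and $N$ is uniformly normally attracting.

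First I will write Hamilton's equations \eqref{eq:ham-x}--\eqref{eq:ham-p} for
\[
\Hs=\omega p_\theta-c\,y\,p_y+\tfrac12 D(1+\kappa y^2)p_\theta^2+\tfrac12\sigma^2p_y^2 .
\]
The relevant components are $\dot y=-cy+\sigma^2p_y$ and $\dot p_y=c\,p_y-\kappa D p_\theta^2\,y$. Both vanish on $\{y=p_y=0\}$, and $\dot\theta,\dot p_\theta$ are tangent to this set. So $\C_{\mathrm{FW}}=\{y=p_y=0\}$ is an exactly invariant, globally defined symplectic center manifold, symplectomorphic to $T^*\T^1$. It contains $N_0$ and is tangent there to $E_H^c$, consistent with \Cref{thm:trichotomy} and \Cref{thm:carrier}.

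On $\C_{\mathrm{FW}}$ the restricted Hamiltonian is $\omega p_\theta+\tfrac12Dp_\theta^2$. Its zero set consists of $N_0$ together with the compact circle
\[
\Sigma=\{p_\theta=-2\omega/D,\ y=p_y=0\},
\]
on which $\dot\theta=\omega+Dp_\theta=-\omega$. The differential of the restricted Hamiltonian on $\Sigma$ is $(\omega+Dp_\theta)\,\dd p_\theta=-\omega\,\dd p_\theta\neq0$, so $\Sigma$ is a regular zero-energy shell. It is invariant and consists of a single periodic orbit.

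Next I identify the normal directions of $\Sigma$ inside $\Hs^{-1}(0)$. Along $\Sigma$ the full differential $\dd\Hs$ has no $\dd y$ or $\dd p_y$ component, since the $y$- and $p_y$-dependence of $\Hs$ is at least quadratic or bilinear and $y=p_y=0$. Therefore $T\Hs^{-1}(0)|_\Sigma=T\Sigma\oplus\mathrm{span}\{\partial_y,\partial_{p_y}\}$. The normal variational equation along $\Sigma$ is then the autonomous linear system
\[
\frac{\dd}{\dd t}\begin{pmatrix}\xi_y\\ \eta_y\end{pmatrix}
=\begin{pmatrix}-c&\sigma^2\\ -4\kappa\omega^2/D & c\end{pmatrix}
\begin{pmatrix}\xi_y\\ \eta_y\end{pmatrix}.
\]
Here I have used $p_\theta^2=4\omega^2/D^2$, and the $\theta$-components decouple at linear order because $b_\theta$ and $a$ have no linear $y$-terms. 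The eigenvalues are $\pm\sqrt{c^2-4\kappa\sigma^2\omega^2/D}$. Choosing $\kappa>c^2D/(4\sigma^2\omega^2)$ makes them purely imaginary, so the normal flow is elliptic.

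An elliptic normal flow has no exponential contraction or expansion. Meanwhile the tangential rate along the periodic orbit $\Sigma$ is zero, so no splitting satisfying normal-hyperbolicity estimates of the type in \Cref{ass:NHIM} can exist. This shows $\Sigma$ is not normally hyperbolic in $\Hs^{-1}(0)$, while $N$ is normally attracting. The mechanism is the one described before the proposition: the nonzero momentum $p_\theta$ activates the curvature term $\tfrac12\partial_y^2 a_{\theta\theta}\,p_\theta^2$, which does not act on any deterministic trajectory.

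The main obstacle is the bookkeeping in the normal-direction step. I must check that the $(y,p_y)$-plane really is a complement to $T\Sigma$ inside $T\Hs^{-1}(0)$, which is why the $y$-dependence of $a$ is chosen to be even. I must also check that no coupling to $(\theta,p_\theta)$ spoils the block form of the normal system. I will verify both by computing $\dd\Hs$ and the full $4\times4$ Jacobian on $\Sigma$ directly.
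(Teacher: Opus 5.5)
Your construction is correct and proves the proposition, but it uses a different mechanism from the paper's.

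The paper takes $M=\T^2\times\R$ with a phase-dependent normal rate, $\dot z=(-1+2\cos\theta_1)z$, and \emph{constant} diffusion $\diag(1,1,q)$. Then $\Hs$ is linear in $z$, and the transverse block along any trajectory in $\C_0$ is triangular, $\begin{pmatrix}\alpha&q\\0&-\alpha\end{pmatrix}$ with $\alpha=-1+2\cos\theta_1$. On the shell, the momentum $\eta_1=-1$ freezes the phase at $\theta_1=\pi/3$, where $\alpha=0$, which gives a nilpotent block with zero exponents. The failure there comes purely from the momentum changing the base motion, and hence changing the phases at which the deterministic rate is sampled.

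In your example the base motion is irrelevant, since nothing depends on $\theta$. Instead, the $y$-curvature of $a_{\theta\theta}$ puts the entry $-\kappa D p_\theta^2$ in the lower-left corner of the transverse block. This converts the deterministic saddle block $\begin{pmatrix}-c&\sigma^2\\0&c\end{pmatrix}$ into an elliptic one. So your closing claim that this is the mechanism described before the proposition is not accurate. It is a second, independent mechanism, and it requires state-dependent noise.

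What each route buys:
\begin{itemize}
\item Yours gives a lower-dimensional example ($n=2$, $k=1$). The normal linearization is autonomous along the whole shell component, and the failure is an open condition in $\kappa$ (elliptic normal Floquet multipliers).
\item The paper's uses constant noise and isolates exactly the rate-resampling phenomenon that its discussion emphasizes.
\end{itemize}

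One bookkeeping difference: your section $\C_{\mathrm{FW}}\cap\Hs^{-1}(0)=N_0\sqcup\Sigma$ is disconnected, and its component $N_0$ is itself normally hyperbolic in $\Hs^{-1}(0)$. The paper's shell $\T^2\times S^1$ is connected and contains $N_0$. This is harmless for the statement, because zero is a regular value of your restricted Hamiltonian and the section fails to be normally hyperbolic on $\Sigma$.
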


\begin{proof}
Let $M=\T^2\times\R$ with coordinates $(\theta_1,\theta_2,z)$, and consider
\begin{equation}
\dot\theta_1=1,
\qquad
\dot\theta_2=0,
\qquad
\dot z=(-1+2\cos\theta_1)z.
\label{eq:counter-drift}
\end{equation}
The torus $N=\T^2\times\{0\}$ is invariant.  Along a deterministic orbit, $\theta_1(t)=\theta_1(0)+t$ and
\[
z(t)=z(0)\exp\!\left[-t+2\sin(\theta_1(0)+t)-2\sin\theta_1(0)\right].
\]
Since the sine terms remain bounded,
\[
|z(t)|\le e^4e^{-t}|z(0)|.
\]
Thus perturbations normal to the torus contract exponentially at asymptotic rate one, while the two tangent directions have zero exponential rate.  Hence $N$ is normally attracting.

Choose the constant positive-definite diffusion tensor $a=\diag(1,1,q)$, $q>0$.  With conjugate variables $(\eta_1,\eta_2,\zeta)$, the Freidlin--Wentzell Hamiltonian is
\begin{equation}
\Hs
=
\eta_1+(-1+2\cos\theta_1)z\zeta
+\frac12(\eta_1^2+\eta_2^2+q\zeta^2).
\label{eq:counter-H}
\end{equation}
The four-dimensional manifold $\C_0=\{z=\zeta=0\}$ is exactly invariant and symplectic.  On it the Hamiltonian reduces to
\[
h(\eta_1,\eta_2)=\eta_1+\frac12(\eta_1^2+\eta_2^2),
\]
so the zero-energy shell is
\begin{equation}
(\eta_1+1)^2+\eta_2^2=1,
\qquad
\Sigma_{0,0}\simeq\T^2\times S^1.
\label{eq:counter-shell}
\end{equation}
This shell is compact.  It is also regular because the gradient of $h$ with respect to $(\eta_1,\eta_2)$ is $(1+\eta_1,\eta_2)$, which cannot vanish on the circle \eqref{eq:counter-shell}.

Now select the point on the momentum circle with $\eta_1=-1$ and $\eta_2=1$.  The Hamiltonian equations restricted to $\C_0$ give
\[
\dot\theta_1=1+\eta_1=0,
\qquad
\dot\theta_2=\eta_2=1.
\]
Thus the Hamiltonian trajectory keeps $\theta_1$ constant.  Choose the constant value $\theta_1=\pi/3$.  At this value $-1+2\cos\theta_1=0$.  The transverse linearized equations along the selected Hamiltonian trajectory are therefore
\begin{equation}
\dot z=q\zeta,
\qquad
\dot\zeta=0,
\label{eq:nilpotent-transverse}
\end{equation}
or
\[
\frac{\dd}{\dd t}\begin{pmatrix}z\\ \zeta\end{pmatrix}
=
\begin{pmatrix}0&q\\0&0\end{pmatrix}
\begin{pmatrix}z\\ \zeta\end{pmatrix}.
\]
The transverse matrix is nilpotent: its square is zero.  Consequently
\[
e^{tA_\perp}=I+tA_\perp
=
\begin{pmatrix}1&qt\\0&1\end{pmatrix}.
\]
Transverse perturbations therefore grow at most linearly rather than exponentially, and both transverse Lyapunov exponents are zero.  There is no exponential transverse contraction or expansion along this trajectory.  Because the $(z,\zeta)$ directions are tangent to $\Hs^{-1}(0)$ along the shell and normal to the shell within that hypersurface, normal hyperbolicity of the zero-energy shell fails there.
\end{proof}

The failure is not caused by a singular energy level, noncompactness, or loss of invariance: the zero-energy shell is compact and regular, and $\C_0$ is exactly invariant.  What changes is the trajectory along which the transverse rate is sampled.  In the deterministic system, $\dot\theta_1=1$, so the orbit continually passes through all values of $\theta_1$.  The instantaneous coefficient $-1+2\cos\theta_1$ may vanish at isolated phases, but its time average is $-1$, producing net exponential contraction.  In the Hamiltonian system, the nonzero tangential momentum $\eta_1=-1$ changes the equation to $\dot\theta_1=0$.  The trajectory can therefore remain permanently at the value $\theta_1=\pi/3$, where the instantaneous transverse contraction is zero.

This counterexample does not invalidate the local results of the preceding sections.  Those results concern the Hamiltonian variational dynamics \emph{along $N_0$}, where $p=0$, and use the deterministic normal rate gap there.  In the present example $\lambda_\parallel=0$ and $\lambda_\perp=1$, so the stronger condition $2(\lambda_\parallel+\delta)<\lambda_\perp$ also holds for sufficiently small $\delta>0$.  The local symplectic center manifold near $N_0$ and the exact Lagrangian strong-unstable manifold of fluctuation extremals therefore still exist.  What fails is the stronger assertion that the whole zero-energy section is normally hyperbolic within $\Hs^{-1}(0)$, or equivalently that its nonzero-momentum trajectories must inherit the deterministic transverse rates.  Any theorem making such a claim requires additional information about transverse rates along the Hamiltonian trajectories on that section.

\section{A solvable example of the fluctuation manifold}
\label{sec:exact-torus}

The preceding results are general.  Before turning to the different role of invariant manifolds in the reaction problem, it is useful to see the fluctuation manifold and its action in one elementary model where every quantity can be written explicitly.  Let
\[
\theta\in\T^k,\qquad r\in\R^m,\qquad m=n-k,
\]
and consider the deterministic system
\begin{equation}
\dot\theta=\Omega,
\qquad
\dot r=-Ar,
\label{eq:linear-torus-drift}
\end{equation}
where all eigenvalues of $A$ have positive real parts.  The torus $N=\T^k\times\{0\}$ is normally attracting.  Let the diffusion tensor be block diagonal,
\begin{equation}
a=
\begin{pmatrix}
D&0\\0&Q
\end{pmatrix},
\qquad D>0,\quad Q>0.
\end{equation}
With conjugate variables $(\eta,\zeta)$, the Freidlin--Wentzell Hamiltonian is
\begin{equation}
H_0
=
\Omega\cdot\eta+\frac12\eta^{\cT}D\eta
-\zeta^{\cT}Ar+\frac12\zeta^{\cT}Q\zeta.
\label{eq:linear-torus-H}
\end{equation}

The center manifold is exactly $\C_0=\{r=\zeta=0\}\simeq T^*\T^k$.  The transverse equations are
\begin{equation}
\dot r=-Ar+Q\zeta,
\qquad
\dot\zeta=A^{\cT}\zeta.
\label{eq:linear-normal}
\end{equation}
Because $A$ is stable, the matrix
\begin{equation}
K=\int_0^\infty e^{-As}Qe^{-A^{\cT}s}\,\dd s
\label{eq:lyapunov-K}
\end{equation}
is well defined, symmetric, and positive definite; differentiating the integral gives the Lyapunov equation $AK+KA^{\cT}=Q$.  A solution of \eqref{eq:linear-normal} that approaches $r=\zeta=0$ in backward time must satisfy $r=K\zeta$.  Since $\dot\eta=0$, any Hamiltonian fluctuation extremal that is backward asymptotic to the zero-momentum torus must have $\eta=0$.  Hence
\begin{equation}
\Lagr_N^+=W^{uu}(N_0)=\{\eta=0,\ r=K\zeta\}.
\label{eq:linear-Lambda}
\end{equation}
Solving for the conjugate variable gives
\begin{equation}
\zeta=K^{-1}r=\nabla_rV(r),
\qquad
V(r)=\frac12r^{\cT}K^{-1}r.
\label{eq:linear-quasi}
\end{equation}

This example displays the general theorem in its simplest form.  The strong-unstable fluctuation manifold is a global graph over the normal state variable $r$, and its generating function is the quadratic action $V$.  Because the projection is globally one-to-one, this model has no caustics and no competing action branches.  Its role in the paper is only to make the abstract exact-Lagrangian construction concrete; no perturbation of this product model is needed for the later results.

\section{Quadratic action near a deterministic reactivity boundary}
\label{sec:switching-law}

We now turn to a different question from the one considered in Sections~\ref{sec:carrier}--\ref{sec:exact-torus}.  There the relevant Hamiltonian fluctuation extremals approached the zero-momentum copy $N_0$ of the deterministic NHIM in backward time at the strong normal rate.  Here the invariant structure plays a different role: it is part of deterministic reaction geometry, and its stable manifold separates trajectories with different reactive fates.

This geometric use of invariant manifolds has a substantial literature.  For two-degree-of-freedom and collinear models, the periodic-orbit formulation was developed well before the general NHIM theory: see Pechukas and McLafferty \cite{PechukasMcLafferty1973}, Pollak and Pechukas \cite{PollakPechukas1978}, and Pechukas and Pollak \cite{PechukasPollak1979}.  In that setting an unstable periodic orbit organizes a recrossing-free dividing surface and the trapped trajectories that form reactivity boundaries.  Wiggins, Wiesenfeld, Jaff\'e, and Uzer \cite{WigginsWiesenfeldJaffeUzer2001}, followed by Uzer, Jaff\'e, Palaci\'an, Yanguas, and Wiggins \cite{UzerEtAl2002}, generalized the phase-space construction to systems with three or more degrees of freedom, where the transition-state structure is a higher-dimensional NHIM and its stable and unstable manifolds form multidimensional separatrices.  Further geometric formulations are reviewed in \cite{WaalkensSchubertWiggins2008}.

The phase-space picture also extends to explicitly time-dependent Hamiltonian systems.  Mandell and Wiggins constructed time-dependent NHIMs, time-dependent stable and unstable manifolds, and time-dependent no-recrossing dividing surfaces for index-one reaction dynamics \cite{MandellWiggins2021}.  Cao and Wiggins treated time-dependent reaction dynamics associated with an index-two saddle and computed the corresponding time-dependent NHIM and its stable and unstable manifolds \cite{CaoWiggins2022}.  These deterministic nonautonomous constructions are useful to distinguish from the noise-realization-dependent geometry discussed below.

Kawai and Komatsuzaki showed that a reaction boundary separating reactive and nonreactive trajectories can remain dynamically meaningful even beyond regimes in which a conventional no-return transition-state surface exists \cite{KawaiKomatsuzaki2010}.  Nagahata, Teramoto, Li, Kawai, and Komatsuzaki developed this idea further under the name \emph{reactivity boundary}, including reactions involving higher-index and multiple saddles \cite{NagahataEtAl2013a,NagahataEtAl2013b}.  Later work used related reactivity-boundary ideas in driven and Langevin-coupled systems \cite{NagahataBorondoBenitoHernandez2020}.  We adopt that established terminology.

Stochastic reaction dynamics introduces other, genuinely different, notions of transition geometry.  Bartsch, Hernandez, Uzer, and collaborators constructed a noise-realization-dependent transition-state trajectory and a moving dividing surface designed to eliminate recrossings in fluctuating environments \cite{BartschHernandezUzer2005,BartschUzerHernandez2005,BartschUzerMoixHernandez2006,BartschEtAl2008,BartschRevueltaBenitoBorondo2012}.  Kawai and Komatsuzaki developed a related time-dependent normal-form approach for multidimensional Langevin equations, including a reaction coordinate that depends explicitly on the random force \cite{KawaiKomatsuzaki2009}.  These constructions are pathwise or noise-dependent in a way that the deterministic boundary below is not.  The stochastic transition state moves with the particular realization of the environmental forcing.  Transition-path theory takes a probabilistic viewpoint instead.  If $A$ and $B$ are reactant and product sets, the forward committor \cite{EVandenEijnden2006,MetznerSchuetteVandenEijnden2009} is defined by
\begin{equation}
q^+(x)=\Pr_x\{\tau_B<\tau_A\}.
\label{eq:committor}
\end{equation}
It is the probability of reaching $B$ before $A$.  Its level sets, especially $q^+=1/2$, provide surfaces of equal probabilistic commitment.  Transition-path theory uses forward and backward committors to describe the statistical ensemble of reactive trajectories, including reactive densities, currents, fluxes, rates, and dominant pathways \cite{EVandenEijnden2006,MetznerSchuetteVandenEijnden2009,EVandenEijnden2010}.

The deterministic reactivity boundary, the noise-realization-dependent stochastic transition state, and an isocommittor surface are therefore not interchangeable objects.  \Cref{thm:quadratic-switching} uses the first of these.  It asks a local Freidlin--Wentzell question: as a parameter brings an unforced trajectory toward a deterministic reactivity boundary, how rapidly does the minimum action required to reach that boundary go to zero?  This question complements the minimum-action methods used to compute rare-event paths \cite{ERenVandenEijnden2004,HeymannVandenEijnden2008}; it does not attempt to replace stochastic transition-state theory or transition-path theory.

Let a scalar parameter $\mu$ vary near a critical value $\mu_c$.  For each $\mu$, suppose that a smooth codimension-one deterministic reactivity boundary $\mathcal W_\mu$ separates two local reactive fates.  In the reaction application this boundary is a stable manifold associated with a transition-state invariant structure.  The system, the initial condition, and the boundary may all depend on $\mu$.  At $\mu=\mu_c$ the selected unforced trajectory reaches the corresponding boundary.  As $\mu$ changes, what matters is the \emph{relative} displacement of the trajectory and the moving boundary; no trajectory crosses an invariant stable manifold at fixed $\mu$.

Consider the controlled system
\begin{equation}
\dot x=f(x,\mu)+B(x,\mu)u(t),
\qquad
x(0)=x_0(\mu),
\label{eq:switch-control-system}
\end{equation}
where $x\in\R^d$, the scalar parameter $\mu$ varies near a critical value $\mu_c$, and $u\in L^2([0,T];\R^m)$.  The cost is
\begin{equation}
J_T[u]=\frac12\int_0^T |u(t)|^2\,\dd t.
\label{eq:switch-cost}
\end{equation}
Suppose that the deterministic system possesses a hyperbolic invariant set $\mathcal N_\mu$ whose stable manifold is the codimension-one deterministic reactivity boundary in a neighborhood of the critical trajectory.  We describe this boundary locally by a smooth signed function $\rho$:
\begin{equation}
\mathcal W_\mu=\{x:\rho(x,\mu)=0\},
\qquad D_x\rho\ne0.
\label{eq:stable-level-set}
\end{equation}
The sign of $\rho$ identifies the side of the boundary and therefore distinguishes the two local deterministic reactive fates.  The value of $\rho$ is not required to be a geometric distance; it is simply a convenient signed coordinate transverse to the boundary.

Let $\Phi_T^u(x_0,\mu)$ denote the endpoint at time $T$ of \eqref{eq:switch-control-system}.  Put $\delta=\mu-\mu_c$ and define the signed terminal boundary coordinate
\begin{equation}
F_T(\delta,u)
=
\rho\!\left(\Phi_T^u(x_0(\mu_c+\delta),\mu_c+\delta),\mu_c+\delta\right).
\label{eq:endpoint-function}
\end{equation}
The condition $F_T(0,0)=0$ says that at $\mu=\mu_c$ the unforced trajectory reaches the deterministic reactivity boundary at time $T$.  The underlying cost is the standard finite-time Freidlin--Wentzell control action \cite{FreidlinWentzell2012,ERenVandenEijnden2004}.  For a fixed sufficiently small radius $r>0$, we localize the minimization to the small-control branch and call the resulting quantity the \emph{local boundary-reaching action}:
\begin{equation}
S_T^{\rm loc}(\mu)
=
\inf\left\{
J_T[u]:
F_T(\mu-\mu_c,u)=0,
\ \|u\|_{L^2}\le r
\right\}.
\label{eq:local-boundary-action-def}
\end{equation}
The radius only restricts attention to the small-control branch that approaches zero as $\mu\to\mu_c$; the leading coefficient obtained below does not depend on this auxiliary choice.

Two first-order sensitivities determine the answer.  The derivative $a_T=\partial_\delta F_T(0,0)$ measures how rapidly the unforced terminal point moves across the boundary when the parameter is varied.  The Fr\'echet derivative with respect to the control,
\[
D_uF_T(0,0):L^2([0,T];\R^m)\longrightarrow\R,
\]
tells us how an infinitesimal forcing history changes the same terminal boundary coordinate.  By definition a Fr\'echet derivative is a bounded linear map.  The space $L^2([0,T];\R^m)$ is a Hilbert space, so the Riesz representation theorem implies that there is a unique function $k_T\in L^2([0,T];\R^m)$ such that
\begin{equation}
D_uF_T(0,0)v
=\int_0^T k_T(t)\cdot v(t)\,\dd t
\label{eq:Riesz-explained}
\end{equation}
for every control perturbation $v\in L^2$ \cite{Brezis2011}.  The function $k_T(t)$ is therefore a sensitivity kernel: it tells us how strongly a small forcing applied at time $t$ affects the terminal boundary coordinate.  Its squared norm
\[
Q_T=\int_0^T |k_T(t)|^2\,\dd t
\]
measures how effectively the available forcing can move the terminal point in the direction needed to reach the boundary.  The control-theoretic relation between $Q_T$ and the finite-time controllability Gramian is derived explicitly after Corollary~\ref{cor:adjoint-coefficient}.  The competition between $a_T$ and $Q_T$ produces the quadratic law stated in \Cref{thm:quadratic-switching}.

\begin{theorem}[Quadratic action near a deterministic reactivity boundary]
\label{thm:quadratic-switching}
Assume that $f$, $B$, $x_0$, and $\rho$ are $C^3$ near the critical trajectory.  Suppose
\begin{equation}
a_T:=\partial_\delta F_T(0,0)\ne0,
\label{eq:aT-def}
\end{equation}
so that changing the parameter moves the unforced endpoint across the boundary to first order.  Suppose also that the Fr\'echet derivative with respect to the control is nonzero.  Let $k_T\in L^2([0,T];\R^m)$ be the unique Riesz representative described in \eqref{eq:Riesz-explained}; that is,
\begin{equation}
D_uF_T(0,0)v
=\int_0^T k_T(t)\cdot v(t)\,\dd t,
\label{eq:Riesz-kernel}
\end{equation}
and define
\begin{equation}
Q_T=\int_0^T |k_T(t)|^2\,\dd t>0.
\label{eq:QT-def}
\end{equation}
Then the local minimum action required to reach $\mathcal W_\mu$ satisfies
\begin{equation}
S_T^{\rm loc}(\mu)
=\frac{a_T^2}{2Q_T}(\mu-\mu_c)^2
+O\!\left(|\mu-\mu_c|^3\right).
\label{eq:quadratic-law}
\end{equation}
Moreover, the minimizing control has the expansion
\begin{equation}
u_\mu(t)
=-\frac{a_T(\mu-\mu_c)}{Q_T}\,k_T(t)
+O\!\left((\mu-\mu_c)^2\right)
\label{eq:optimal-control-leading}
\end{equation}
in $L^2$.  If all other feasible boundary-reaching branches have action bounded away from zero as $\mu\to\mu_c$, then the local minimum in \eqref{eq:quadratic-law} is also the global minimum for $|\mu-\mu_c|$ sufficiently small.
\end{theorem}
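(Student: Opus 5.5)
The plan is to treat \eqref{eq:local-boundary-action-def} as an equality-constrained minimization of a quadratic functional on the Hilbert space $L^2([0,T];\R^m)$ with one scalar $C^2$ constraint, and to solve it by a Lyapunov--Schmidt-type reduction.

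\textbf{Regularity and expansion of the constraint.} Since $f$, $B$, $x_0$, $\rho$ are $C^3$, smooth dependence of ODE solutions on initial data, parameters, and $L^2$ controls makes the endpoint map $(\delta,u)\mapsto\Phi_T^u$ a $C^2$ (indeed $C^3$) map near $(0,0)$. Hence $F_T$ is $C^2$, with
\[
F_T(\delta,u)=a_T\delta+\langle k_T,u\rangle+R(\delta,u),
\qquad |R(\delta,u)|\le C(\delta^2+\|u\|^2),
\]
for $|\delta|$ and $\|u\|$ small.

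\textbf{Decomposition of the control.} Write $u=s\,\hat k+w$ with $\hat k=k_T/\sqrt{Q_T}$ and $w\perp k_T$, so that $J_T[u]=\tfrac12(s^2+\|w\|^2)$ and $\langle k_T,u\rangle=\sqrt{Q_T}\,s$. Since $\partial_sF_T(0,0)=\sqrt{Q_T}\ne0$, the implicit function theorem gives a $C^2$ function $s=\sigma(\delta,w)$ on a neighborhood of $(0,0)$ solving $F_T=0$. It has the form
\[
\sigma(\delta,w)=-\frac{a_T}{\sqrt{Q_T}}\,\delta+O(\delta^2+\|w\|^2),
\]
and there is no linear $w$-term because $D_wF_T(0,0)=0$ on $k_T^\perp$. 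Moreover, for $r$ small, every feasible $u$ with $\|u\|\le r$ and $|\delta|$ small lies in this graph.

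\textbf{Reduced minimization.} The problem becomes the minimization of
\[
g(\delta,w)=\tfrac12\bigl(\sigma(\delta,w)^2+\|w\|^2\bigr).
\]
Upper bound: taking $w=0$ gives
\[
S_T^{\rm loc}\le\frac{a_T^2\delta^2}{2Q_T}+O(|\delta|^3).
\]
Lower bound: any near-minimizer has $g\le C\delta^2$, so $\|w\|=O(|\delta|)$. Then $\sigma=-a_T\delta/\sqrt{Q_T}+O(\delta^2)$, so $\sigma^2\ge a_T^2\delta^2/Q_T-C|\delta|^3$, and $\|w\|^2\ge0$. Together these give \eqref{eq:quadratic-law}. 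For the control expansion I need the sharper estimate $\|w\|=O(\delta^2)$. The $\sigma$-expansion contributes $\sigma^2=a_T^2\delta^2/Q_T+O(|\delta|\,\|w\|^2)+O(|\delta|^3)$, because the cross term between the leading $\delta$ term and the $O(\|w\|^2)$ correction is only $O(|\delta|\,\|w\|^2)$. Hence $g\ge a_T^2\delta^2/(2Q_T)+\tfrac12\|w\|^2(1-C|\delta|)-C'|\delta|^3$, which, compared with the upper bound, only yields $\|w\|=O(|\delta|^{3/2})$.

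\textbf{Main obstacle: the second-order estimate for $w$.} Pinning $w$ to $O(\delta^2)$ is where I expect the real difficulty. First I would show that a minimizer exists. Weak compactness is awkward here, so instead I would note that the reduced functional $g$ is $C^2$ in $w$ with Hessian $I+O(\delta)$, which is coercive for small $\delta$. It is therefore strictly convex on a small ball, and a unique minimizer $w_\delta$ exists, solving
\[
w+\sigma\,D_w\sigma=0.
\]
Since $\sigma=O(\delta)$, and $D_w\sigma=O(|\delta|+\|w\|)$ because $D_w\sigma(0,0)=0$, this equation gives $\|w_\delta\|=O(\delta^2)$. Substituting back gives $u_\mu=\sigma\hat k+w_\delta$, which is \eqref{eq:optimal-control-leading}, and the action expansion with an $O(|\delta|^3)$ error.

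\textbf{Global statement.} The final claim is immediate: if every other feasible branch has action bounded below by some $c>0$, then for $|\delta|$ small enough that $a_T^2\delta^2/(2Q_T)+O(|\delta|^3)<c$, the local minimum is also the global one.
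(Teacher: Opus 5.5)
Your argument is correct, but it takes a different route from the paper. The paper keeps the constraint and adjoins a scalar multiplier. It applies the implicit-function theorem to $u+\nu D_uF_T(\delta,u)^*=0$, $F_T(\delta,u)=0$ in the unknowns $(u,\nu)$. The linearization $(v,\beta)\mapsto(v+\beta k_T,\langle k_T,v\rangle_{L^2})$ is invertible since $Q_T>0$, so it obtains the branch $u_\delta=-a_T\delta k_T/Q_T+O(\delta^2)$, $\nu_\delta=O(\delta)$ in one step. A Hessian check then makes this a strict local minimum. Existence of a ball-constrained minimizer comes separately from the direct method, using weak sequential continuity of the control-affine endpoint map, and the bound $\|u\|=O(|\delta|)$ then forces that minimizer onto the branch.

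You instead eliminate the constraint by writing it as a graph $s=\sigma(\delta,w)$ over $k_T^\perp$. This gives the action asymptotics from elementary two-sided bounds without knowing that a minimizer exists. It also replaces both the direct method and the second-order sufficiency check by strong convexity of the reduced functional. The price is the bootstrap you correctly flagged: energy comparison alone gives only $\|w\|=O(|\delta|^{3/2})$. You need the reduced Euler--Lagrange equation $w=-\sigma D_w\sigma$, together with $D_w\sigma(0,0)=0$, to reach $O(\delta^2)$. The multiplier formulation delivers that estimate directly, along with the exact Pontryagin-type structure $u_\delta=-\nu_\delta D_uF_T(\delta,u_\delta)^*$ underlying Corollary~\ref{cor:adjoint-coefficient}.

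Two routine points deserve explicit mention in a write-up. First, on a ball $\|w\|\le\rho$ the reduced Hessian is $I+O(|\delta|+\rho^2)$, not $I+O(\delta)$, so $\rho$ must be fixed small together with $\delta$. Second, $r$ should be no larger than $\rho$ and should lie inside the uniqueness neighborhood of your implicit-function theorem. Then every feasible control in the $r$-ball corresponds to some $w$ in that ball, and your reduced minimizer is the minimizer in \eqref{eq:local-boundary-action-def}.
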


The physical interpretation of the coefficient $a_T^2/(2Q_T)$ is clearest before turning to the proof.  If changing $\mu$ moves the deterministic endpoint rapidly away from the boundary, $|a_T|$ is large and more forcing is required to compensate.  If the available forcing moves the endpoint efficiently in the normal direction, $Q_T$ is large and the required action is smaller.  The quadratic power itself reflects the quadratic cost $\frac12\int|u|^2\,\dd t$ together with the fact that the displacement to be corrected is linear in $\mu-\mu_c$.

\begin{proof}
We first note that the minimum in \eqref{eq:local-boundary-action-def} is attained whenever the local feasible set is nonempty.  For sufficiently small $r$, the endpoint map is weakly sequentially continuous on bounded subsets of $L^2$ for the finite-dimensional control-affine system \eqref{eq:switch-control-system}; the feasible set in the closed ball is therefore weakly closed, while $J_T$ is weakly lower semicontinuous.  The direct method gives a minimizer.

Taylor expansion of \eqref{eq:endpoint-function} at $(\delta,u)=(0,0)$ gives
\begin{equation}
F_T(\delta,u)
=a_T\delta+\int_0^T k_T(t)\cdot u(t)\,\dd t
+R(\delta,u),
\label{eq:endpoint-expansion}
\end{equation}
where
\begin{equation}
|R(\delta,u)|
\le C\left(\delta^2+|\delta|\,\|u\|_{L^2}+\|u\|_{L^2}^2\right)
\end{equation}
near the origin.  Ignoring the remainder for the moment, the linearized terminal condition is
\begin{equation}
\int_0^T k_T(t)\cdot u(t)\,\dd t=-a_T\delta.
\label{eq:linear-terminal-condition}
\end{equation}
By the Cauchy--Schwarz inequality,
\begin{equation}
a_T^2\delta^2
\le Q_T\int_0^T |u(t)|^2\,\dd t,
\end{equation}
and equality is attained by
\begin{equation}
u_0(t)=-\frac{a_T\delta}{Q_T}k_T(t).
\end{equation}
The linearized minimum cost is therefore $a_T^2\delta^2/(2Q_T)$.

To pass to the nonlinear endpoint condition, introduce a scalar Lagrange multiplier $\nu\in\R$ and consider
\begin{equation}
u+\nu D_uF_T(\delta,u)^*=0,
\qquad
F_T(\delta,u)=0.
\label{eq:lagrange-system}
\end{equation}
At $(\delta,u,\nu)=(0,0,0)$ the derivative of \eqref{eq:lagrange-system} with respect to $(u,\nu)$ maps $(v,\beta)$ to
\begin{equation}
\left(v+\beta k_T,\;\langle k_T,v\rangle_{L^2}\right).
\end{equation}
This operator is invertible when $Q_T=\|k_T\|_{L^2}^2>0$: given $(f,g)$, one first finds $\beta=(\langle k_T,f\rangle-g)/Q_T$ and then $v=f-\beta k_T$.  The implicit-function theorem therefore gives a unique small stationary branch $(u_\delta,\nu_\delta)$ with
\begin{equation}
u_\delta
=-\frac{a_T\delta}{Q_T}k_T+O(\delta^2),
\qquad
\nu_\delta=O(\delta).
\end{equation}
Substitution into \eqref{eq:switch-cost} proves \eqref{eq:quadratic-law}.  The Hessian of the constrained Lagrangian in the control direction is the identity plus the $O(\delta)$ perturbation $\nu_\delta D_{uu}^2F_T$.  It is therefore positive definite on the tangent space of the constraint for sufficiently small $|\delta|$.  The stationary branch is consequently a strict local minimum near the origin.

It remains to connect this stationary branch with the minimum in \eqref{eq:local-boundary-action-def}.  The expansion above supplies a feasible control of norm $O(|\delta|)$ and cost $O(\delta^2)$.  Hence every minimizer in the closed ball satisfies $\|u\|_{L^2}=O(|\delta|)$ as $\delta\to0$.  For $|\delta|$ sufficiently small it therefore lies in the neighborhood in which $D_uF_T\ne0$ and the implicit-function theorem gives the unique stationary branch.  Every minimizer must satisfy the Lagrange-multiplier equations, so it coincides with $u_\delta$.  This proves the asserted local minimum formula.  The final statement follows because this branch has action tending to zero while any competing branch bounded away from zero cannot minimize sufficiently near $\mu_c$.
\end{proof}

The coefficient in \eqref{eq:quadratic-law} can be calculated without solving a separate optimization problem for several nearby parameter values.  The key observation is that one only needs to know how a small perturbation applied at each earlier time affects the signed boundary coordinate at the final time.  An adjoint equation transports the final normal covector backward along the deterministic trajectory and provides exactly this sensitivity.  Let
\begin{equation}
x_c(t)=\Phi_t^0(x_0(\mu_c),\mu_c)
\end{equation}
be the critical deterministic trajectory and define
\begin{equation}
A(t)=D_xf(x_c(t),\mu_c),
\qquad
B_c(t)=B(x_c(t),\mu_c).
\end{equation}
Let
\begin{equation}
n_T=D_x\rho(x_c(T),\mu_c)
\end{equation}
be a normal covector to the deterministic reactivity boundary at the terminal point.  The adjoint equation is
\begin{equation}
\dot\lambda=-A(t)^{\cT}\lambda,
\qquad
\lambda(T)=n_T.
\label{eq:adjoint-equation}
\end{equation}

\begin{corollary}[Computing the coefficient from the critical trajectory]
\label{cor:adjoint-coefficient}
With the notation above,
\begin{equation}
k_T(t)=B_c(t)^{\cT}\lambda(t),
\qquad
Q_T=\int_0^T|B_c(t)^{\cT}\lambda(t)|^2\,\dd t.
\label{eq:adjoint-Q}
\end{equation}
If $\eta(t)=\partial_\mu x(t;\mu,0)|_{\mu=\mu_c}$, then
\begin{equation}
\dot\eta=A(t)\eta+\partial_\mu f(x_c(t),\mu_c),
\qquad
\eta(0)=x_0'(\mu_c),
\label{eq:parameter-variational}
\end{equation}
and
\begin{equation}
a_T
=\partial_\mu\rho(x_c(T),\mu_c)+n_T\cdot\eta(T).
\label{eq:aT-adjoint}
\end{equation}
Consequently
\begin{equation}
C_T:=\frac{a_T^2}{2Q_T}
\label{eq:CT-coefficient}
\end{equation}
is determined entirely by the critical deterministic trajectory, the motion of the deterministic reactivity boundary with the parameter, and the effectiveness of the available noise in the direction normal to the reactivity boundary.
\end{corollary}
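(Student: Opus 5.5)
The plan is to compute both first-order sensitivities of $F_T(\delta,u)=\rho(\Phi_T^u(x_0(\mu_c+\delta),\mu_c+\delta),\mu_c+\delta)$ at $(0,0)$ by the chain rule. Each of the two resulting terms involves the derivative of the endpoint map of \eqref{eq:switch-control-system} along the critical trajectory $x_c$. The $C^3$ hypotheses make the endpoint map Fr\'echet differentiable in $(\delta,u)$ near $(0,0)$, with derivatives given by solutions of the linearized (variational) equations along $x_c$. So the proof reduces to identifying these linearizations and then pairing them with $n_T=D_x\rho(x_c(T),\mu_c)$.

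First, the control sensitivity. For a perturbation $v\in L^2([0,T];\R^m)$, write $\xi(t)=D_u\Phi_t^u|_{u=0}v$. Differentiating \eqref{eq:switch-control-system} at $u=0$ (the term $D_xB\,u$ vanishes because $u=0$) gives
\[
\dot\xi=A(t)\xi+B_c(t)v,\qquad \xi(0)=0.
\]
Hence $D_uF_T(0,0)v=n_T\cdot\xi(T)$. To put this in Riesz form, I will use the adjoint $\lambda$ from \eqref{eq:adjoint-equation}. One has $\frac{\dd}{\dd t}(\lambda\cdot\xi)=-\lambda\cdot A\xi+\lambda\cdot(A\xi+B_cv)=\lambda\cdot B_cv$. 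This is the same pairing-preservation identity as \eqref{eq:dual-pairing}, now with forcing. Integrating from $0$ to $T$ and using $\xi(0)=0$, $\lambda(T)=n_T$ gives
\[
n_T\cdot\xi(T)=\int_0^T (B_c^{\cT}\lambda)\cdot v\,\dd t.
\]
Uniqueness of the Riesz representative then yields $k_T=B_c^{\cT}\lambda$, and $Q_T=\|k_T\|_{L^2}^2$ is \eqref{eq:adjoint-Q}.

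Second, the parameter sensitivity. With $u=0$, differentiate $x(t;\mu,0)$ in $\mu$. Both the vector field $f(x,\mu)$ and the initial condition $x_0(\mu)$ depend on $\mu$, which produces \eqref{eq:parameter-variational} with the inhomogeneous term $\partial_\mu f$ and initial value $x_0'(\mu_c)$. The total derivative of $\rho(x(T;\mu,0),\mu)$ has two parts: the explicit dependence $\partial_\mu\rho$, which encodes the motion of the boundary $\mathcal W_\mu$, and the implicit part $n_T\cdot\eta(T)$. This gives \eqref{eq:aT-adjoint}. If desired, the adjoint also gives $n_T\cdot\eta(T)=\lambda(0)\cdot x_0'(\mu_c)+\int_0^T\lambda\cdot\partial_\mu f\,\dd t$, so both coefficients can be computed from a single backward solve.

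Finally, substituting these expressions into $C_T=a_T^2/(2Q_T)$ from Theorem~\ref{thm:quadratic-switching} shows that $C_T$ depends only on data along $x_c$, on $\partial_\mu\rho$, and on $B_c^{\cT}\lambda$. The main (mild) obstacle is justifying the differentiability of the control-to-endpoint map on $L^2$ and the interchange of derivative with evaluation at $T$. I would handle this by the standard Gr\"onwall estimates for the variational ODE with $L^2$ forcing: they show the remainder is $o(\|v\|_{L^2})$ uniformly, which is the same regularity already used in the proof of Theorem~\ref{thm:quadratic-switching}.
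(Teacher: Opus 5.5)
Your proposal is correct and follows essentially the same route as the paper: linearize the endpoint map along $x_c$, use the adjoint pairing identity $\frac{\dd}{\dd t}(\lambda\cdot\xi)=(B_c^{\cT}\lambda)\cdot u$ together with uniqueness of the Riesz representative to identify $k_T$, and obtain $a_T$ by differentiating the uncontrolled trajectory and the level-set function $\rho$ in $\mu$. Your two additions are sound details that the paper leaves implicit. The first is the single-backward-solve formula $n_T\cdot\eta(T)=\lambda(0)\cdot x_0'(\mu_c)+\int_0^T\lambda\cdot\partial_\mu f\,\dd t$, which corresponds to the paper's frozen-parameter remark. The second is the Gr\"onwall justification that the control-to-endpoint map is differentiable on $L^2$.
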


\begin{proof}
The linearized response to a control perturbation is
\begin{equation}
\dot\xi=A(t)\xi+B_c(t)u(t),
\qquad \xi(0)=0.
\end{equation}
To see the adjoint identity explicitly, differentiate the scalar pairing $\lambda(t)\cdot\xi(t)$:
\begin{align}
\frac{\dd}{\dd t}\bigl(\lambda\cdot\xi\bigr)
&=\dot\lambda\cdot\xi+\lambda\cdot\dot\xi\\
&=(-A^{\cT}\lambda)\cdot\xi
 +\lambda\cdot(A\xi+B_cu)\\
&=(B_c^{\cT}\lambda)\cdot u.
\label{eq:adjoint-pairing-derivative}
\end{align}
The two terms containing $A$ cancel because $(A^{\cT}\lambda)\cdot\xi=\lambda\cdot A\xi$.  Integrating \eqref{eq:adjoint-pairing-derivative} from $0$ to $T$, and using $\xi(0)=0$ together with $\lambda(T)=n_T$, gives
\begin{equation}
n_T\cdot\xi(T)
=\int_0^T B_c(t)^{\cT}\lambda(t)\cdot u(t)\,\dd t.
\label{eq:adjoint-pairing-integral}
\end{equation}
Comparison with the Riesz representation \eqref{eq:Riesz-kernel} shows that
\[
k_T(t)=B_c(t)^{\cT}\lambda(t),
\]
and therefore proves \eqref{eq:adjoint-Q}.  Differentiation of the uncontrolled trajectory with respect to $\mu$ gives \eqref{eq:parameter-variational}; differentiating the level-set condition gives \eqref{eq:aT-adjoint}.
\end{proof}

The control-theoretic meaning of $Q_T$ can now be stated precisely.  Let $\Phi_A(t,s)$ be the state-transition matrix for the linearized homogeneous equation $\dot\xi=A(t)\xi$.  The adjoint solution is
\[
\lambda(t)=\Phi_A(T,t)^{\cT}n_T,
\]
so
\[
k_T(t)=B_c(t)^{\cT}\Phi_A(T,t)^{\cT}n_T.
\]
The standard finite-time controllability Gramian for the linearized controlled system is \cite{Sontag1998}
\begin{equation}
W_c(T)
=
\int_0^T
\Phi_A(T,t)B_c(t)B_c(t)^{\cT}\Phi_A(T,t)^{\cT}\,\dd t.
\label{eq:controllability-gramian}
\end{equation}
Consequently
\begin{equation}
Q_T=n_T^{\cT}W_c(T)n_T.
\label{eq:projected-gramian}
\end{equation}
Thus $Q_T$ is not a new type of Gramian: it is the ordinary finite-time controllability Gramian evaluated in the one-dimensional terminal direction normal to the reactivity boundary.  This is the precise sense in which it is a projected controllability Gramian.

\begin{remark}[When the boundary itself moves]
The term $\partial_\mu\rho$ in \eqref{eq:aT-adjoint} accounts for motion of the deterministic reactivity boundary itself as the parameter changes.  This contribution is essential in the solvent application.  An equivalent computational device is to append $\mu$ as a frozen variable, $\dot\mu=0$.  The corresponding component of the adjoint covector then records the motion of the boundary automatically.
\end{remark}

\begin{remark}[Noise acting in only some variables]
No inverse diffusion matrix occurs in \Cref{thm:quadratic-switching}.  \Cref{thm:quadratic-switching} uses only the directions in which the noise acts, represented by $B$, and the response quantity $Q_T$.  It therefore remains valid when the diffusion matrix is singular, provided those available noise directions can influence the normal direction to the deterministic reactivity boundary at first order.  This is what is meant here by \emph{degenerate noise}.  In the solvent model below the random force acts directly on only one momentum variable.
\end{remark}

\section{Solvent inertia and a moving deterministic reactivity boundary}
\label{sec:solvent-application}

We now apply \Cref{thm:quadratic-switching} to a reaction-dynamics model introduced by Garcia-Meseguer and Carpenter and subsequently analyzed in phase space by Garcia-Meseguer, Carpenter, and Wiggins \cite{GarciaMeseguerCarpenter2019,GarciaMeseguerCarpenterWiggins2019}.  The original purpose of the model was to isolate an inertial solvent effect.  A reacting solute changes its shape on a time scale much shorter than the time required for a surrounding solvent shell to reorganize.  The important feature for the present problem is that the potential-energy surface (PES) is held fixed while the effective solvent mass is varied.  The published phase-space calculation showed that the periodic-orbit transition-state geometry nevertheless moves in phase space.  The model therefore separates a potential-energy barrier from a dynamical reactivity boundary in a particularly clean way.

As distinguished in \Cref{sec:switching-law}, the stochastic forcing introduced here is used only to measure the weak-noise action required to reach the deterministic phase-space reactivity boundary of the unforced Hamiltonian system; we neither construct a noise-realization-dependent stochastic transition state nor compute a committor.

The word ``Hamiltonian'' now refers to a different object from the one used in the first part of the paper.  The function $H_{\mu_2}$ introduced below is the \emph{physical mechanical Hamiltonian} of the solvent--solute system, whose state variables are $(r_1,r_2,p_1,p_2)$.  By contrast, $H_{\mathrm{FW}}$ in \Cref{sec:fw} is the auxiliary Hamiltonian obtained from a stochastic differential equation by introducing conjugate variables for a large-deviation variational problem.  If weak noise is added to this four-dimensional mechanical system, one could apply the Freidlin--Wentzell construction once again and obtain an eight-dimensional auxiliary Hamiltonian system.  We do not need that additional step here because the quadratic theorem was formulated directly with the control representation of the action.

The application gives three numerical checks of \Cref{thm:quadratic-switching}.  First we compute the minimum action for several solvent masses close to the deterministic threshold and test the predicted quadratic dependence.  Second we compute the coefficient independently using only the deterministic trajectory at the threshold mass.  Third we solve one nonlinear controlled problem to check the linear approximation used in deriving the local formula.

\subsection{The deterministic solvent--solute Hamiltonian}

The model has a reactive coordinate $r_1$, a solvent coordinate $r_2$, and conjugate momenta $p_1,p_2$.  The Hamiltonian is
\begin{equation}
H_{\mu_2}(r,p)
=\frac{p_1^2}{2\mu_1}+\frac{p_2^2}{2\mu_2}+V(r_1,r_2),
\label{eq:solvent-H}
\end{equation}
with $\mu_1=1$ and
\begin{equation}
V(r_1,r_2)
=\sum_{j=1}^{5}c_jr_1^{j-1}
+c_6(c_7-r_2)^2
+\frac{c_8}{(r_2-r_1)^{12}}.
\label{eq:solvent-V}
\end{equation}
The coefficients are taken directly from Table A.1 of Ref.~\cite{GarciaMeseguerCarpenterWiggins2019}:
\begin{center}
\begin{tabular}{c@{\qquad}c@{\qquad}c@{\qquad}c}
\toprule
$c_1$ & $c_2$ & $c_3$ & $c_4$\\
\midrule
$321.904484$ & $-995.713452$ & $1118.689573$ & $-537.856726$\\
\bottomrule
\end{tabular}
\end{center}
\begin{center}
\begin{tabular}{c@{\qquad}c@{\qquad}c@{\qquad}c}
\toprule
$c_5$ & $c_6$ & $c_7$ & $c_8$\\
\midrule
$92.976121$ & $1$ & $1$ & $0.01$\\
\bottomrule
\end{tabular}
\end{center}
Solving $\nabla V=0$ directly from these coefficients gives the reactant minimum
\begin{equation}
(r_1,r_2)=(0.98778676,1.80661210),
\qquad V_R=0.77040026,
\end{equation}
the product minimum
\begin{equation}
(r_1,r_2)=(1.98516984,2.75641794),
\qquad V_P=-6.66284435,
\end{equation}
and the potential-energy saddle
\begin{equation}
(r_1^\ddagger,r_2^\ddagger)=(1.36560508,2.16176887),
\qquad V^\ddagger=3.47291422.
\label{eq:solvent-saddle}
\end{equation}
We use the energy stated in Sec.~3 of Ref.~\cite{GarciaMeseguerCarpenterWiggins2019},
\begin{equation}
E=3.691966889,
\label{eq:solvent-energy}
\end{equation}
which is $0.21905267$ above the saddle energy.

For each value of $\mu_2$ considered below, the fixed-energy surface contains the unstable periodic orbit that forms the one-dimensional transition-state NHIM for this two-degree-of-freedom system.  For the present calculations, the published fits to these orbits were used only as initial guesses; the periodic orbits were then computed by shooting from the Hamiltonian equations.  The orbits for $\mu_2=10$ and $100$ are shown in \Cref{fig:solvent-ts-geometry}; the potential is identical in the two cases.

\begin{figure}[htbp]
\centering
\includegraphics[width=0.78\textwidth]{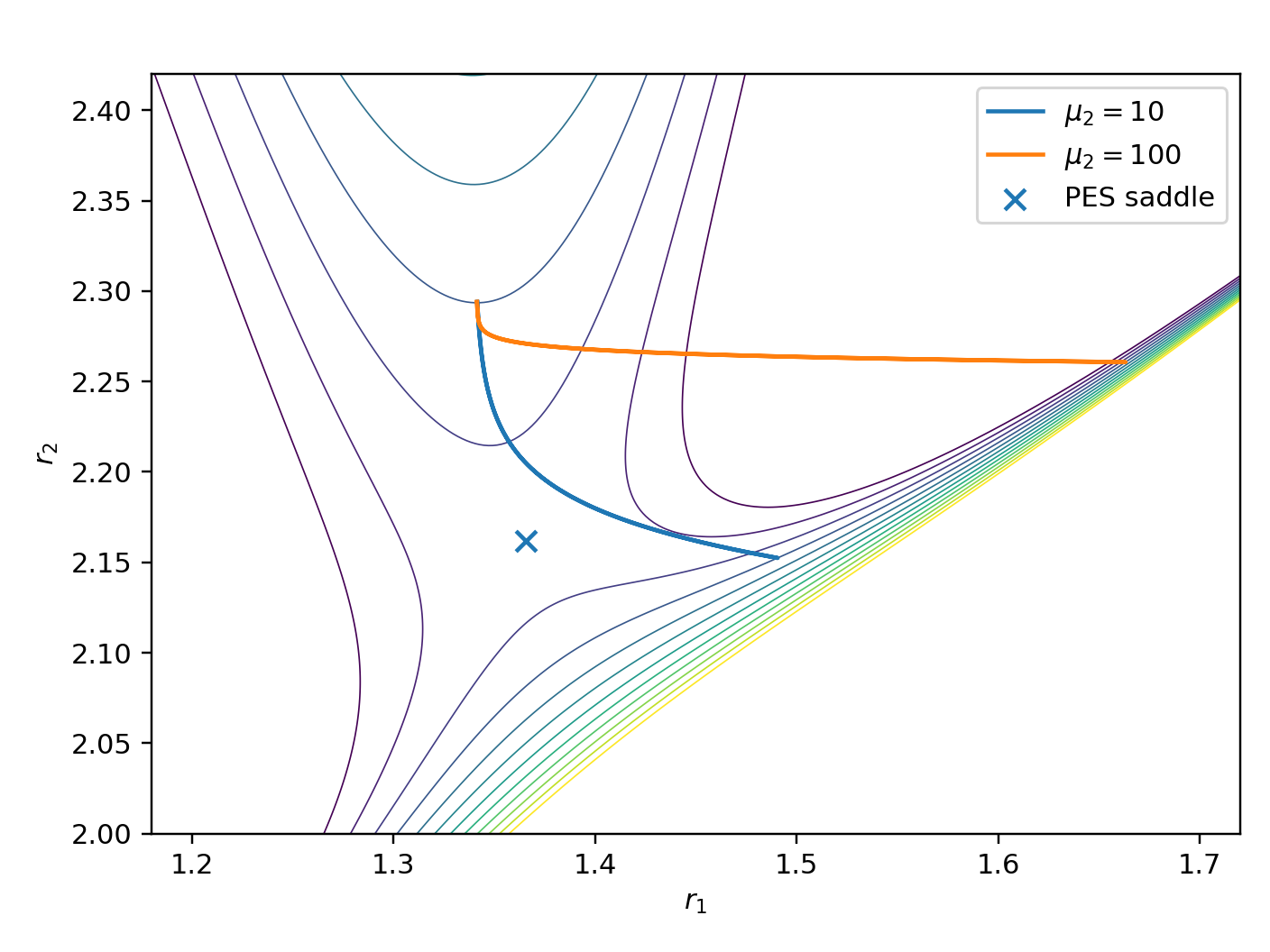}
\caption{Configuration-space projections of the transition-state periodic orbits at $\mu_2=10$ and $100$, computed here by shooting from zero-momentum turning points at the published total energy.  The potential-energy contours are identical because only the kinetic-energy mass parameter is changed.  The cross marks the PES saddle.}
\label{fig:solvent-ts-geometry}
\end{figure}

The linearization at the potential-energy saddle gives one real eigenvalue pair $\pm\lambda$ and one imaginary pair $\pm i\omega$.  The real pair describes exponential growth and decay in the hyperbolic direction; $\omega$ is the frequency of the bounded linear oscillation.  Computing these quantities from the Hessian of $V$ and the mass matrix gives
\begin{center}
\begin{tabular}{c@{\qquad}ccc}
\toprule
$\mu_2$ & $\lambda$ & $\omega$ & $\lambda/\omega$\\
\midrule
$0.1$ & $8.99869$ & $20.72097$ & $0.43428$\\
$1$   & $8.04736$ & $7.32716$  & $1.09829$\\
$10$  & $7.31539$ & $2.54889$  & $2.87003$\\
$100$ & $7.16174$ & $0.82332$  & $8.69857$\\
\bottomrule
\end{tabular}
\end{center}
As $\mu_2$ increases from $0.1$ to $100$, the hyperbolic rate decreases by about $20\%$, whereas the oscillation frequency decreases by roughly a factor of $25$.  The strong growth of the ratio $\lambda/\omega$ therefore comes mainly from the increasing oscillatory time scale, not from a comparable increase in the hyperbolic instability.

\subsection{A deterministic threshold between reactive fates}

We now choose a one-parameter family of initial conditions for which changing only the solvent mass changes the deterministic reactive fate.  We fix the initial configuration at the potential-energy saddle,
\[
(r_1(0),r_2(0))=(r_1^\ddagger,r_2^\ddagger),
\]
and choose the momenta subject to the fixed-energy constraint.  At this configuration the largest possible magnitude of $p_1$ occurs when $p_2=0$, so define
\begin{equation}
p_{1,\max}=\sqrt{2\mu_1(E-V^\ddagger)}.
\end{equation}
We then choose
\begin{equation}
p_1(0)=0.75\,p_{1,\max}.
\label{eq:p1-family}
\end{equation}
The solvent momentum is the negative solution of the energy constraint,
\begin{equation}
p_2(0;\mu_2)
=-\sqrt{2\mu_2\left(E-V^\ddagger-\frac{p_1(0)^2}{2\mu_1}\right)}.
\label{eq:p2-family}
\end{equation}
Only the solvent mass is varied.  For definiteness, a trajectory is recorded as reaching the product side when $r_1$ first reaches the product-minimum value $1.98516984$, and as returning to the reactant side when $r_1$ first reaches the reactant-minimum value $0.98778676$.

Bisection of the deterministic Hamiltonian trajectories gives
\begin{equation}
\boxed{\mu_c\approx9.975918.}
\label{eq:mu-critical}
\end{equation}
This is not a universal critical solvent mass.  It is the threshold value for the particular initial-condition family \eqref{eq:p1-family}--\eqref{eq:p2-family} and the stated first-arrival convention.  At this value the selected deterministic trajectory lies on the local stable manifold separating the two deterministic reactive fates.  For masses on opposite sides of $\mu_c$ the same construction produces different deterministic reactive fates, even though the potential-energy surface is unchanged.

\subsection{Weak forcing of the solvent momentum}

We add a weak force only to the solvent momentum.  In control form,
\begin{align}
\dot r_1&=p_1,\\
\dot r_2&=\frac{p_2}{\mu_2},\\
\dot p_1&=-\partial_{r_1}V,\\
\dot p_2&=-\partial_{r_2}V+u(t),
\label{eq:solvent-control}
\end{align}
with action
\begin{equation}
I_T[u]=\frac12\int_0^T u(t)^2\,\dd t.
\label{eq:solvent-action}
\end{equation}
Equivalently, the stochastic equation has additive noise $\sqrt{\eps}\,\dd W_t$ in the $p_2$ equation.  A constant noise amplitude $\sigma$ rescales the action coefficient by $1/\sigma^2$.

The random force acts directly only in the $p_2$ equation.  Thus the diffusion is singular in the four-dimensional phase space: this is the concrete instance of degenerate noise mentioned above.

There is one further geometric point.  At a single mechanical energy, the transition-state periodic orbit is one-dimensional and its stable manifold lies inside the three-dimensional energy surface.  The forcing in \eqref{eq:solvent-control} changes mechanical energy, so a forced trajectory is not confined to that surface.  We therefore require a deterministic reactivity boundary in the full four-dimensional mechanical phase space.

Take a smooth family of transition-state periodic orbits over a small interval of energies.  If $N_{\mu_2,E'}$ denotes the periodic orbit at energy $E'$, define
\begin{equation}
\mathcal N_{\mu_2,I}
=
\bigcup_{E'\in I}N_{\mu_2,E'}.
\label{eq:energy-cylinder}
\end{equation}
One dimension of $\mathcal N_{\mu_2,I}$ is the phase along the periodic orbit and the second is the energy parameter, so $\dim\mathcal N_{\mu_2,I}=2$.  Provided the family is smooth and uniformly normally hyperbolic for $E'\in I$, this union is a locally invariant normally hyperbolic cylinder.  It has one stable direction, hence
\[
\dim W^s(\mathcal N_{\mu_2,I})=3.
\]
Its stable manifold is therefore codimension one in the four-dimensional mechanical phase space and locally separates the two deterministic reactive fates.  This is the \emph{deterministic reactivity boundary} used in the minimum-action calculation.  It is an invariant structure of the unforced Hamiltonian dynamics; it is not the noise-dependent moving surface of stochastic transition-state theory and it is not defined by a committor.

\subsection{Why the equilibrium energy barrier does not detect the effect}

There is a useful comparison with an equilibrium Langevin formulation.  If friction and fluctuation satisfying fluctuation--dissipation balance are added in the solvent momentum, then the mechanical-energy difference $W=H-H_R$ satisfies the stationary zero-energy Hamilton--Jacobi equation.  When the additional reachability hypotheses needed to identify this solution with the quasipotential hold, the leading saddle barrier is
\begin{equation}
V^\ddagger-V_R=2.70251397,
\end{equation}
which is independent of $\mu_2$.  Thus the mass-dependent displacement of the dynamical transition state is not encoded in this equilibrium energy difference.  This is why the finite-time boundary-reaching problem, rather than an equilibrium barrier calculation, is used below.  The distinction is consistent with the different roles played by Kramers/Grote--Hynes rate corrections and by a no-recrossing phase-space dividing surface \cite{Kramers1940,GroteHynes1980,GarciaMeseguerCarpenter2019}.

\subsection{First check: quadratic scaling away from the threshold}

For each $\mu_2>\mu_c$ in \Cref{tab:stable-action}, the transition-state periodic orbit was computed by shooting and the unforced initial condition was integrated until its first return to the reactant-side threshold.  Near the point where this trajectory comes closest to the periodic orbit, the stable manifold can be approximated to first order by its tangent hyperplane.  A left unstable Floquet covector supplies the normal covector to this hyperplane.  Integrating the adjoint equation backward then tells us how a small solvent force applied at each earlier time changes the final normal displacement.  From this response we obtain the minimum linearized action in \Cref{thm:quadratic-switching}.

At each off-threshold mass, let $d_T(\mu_2)$ be the signed normal displacement of the uncontrolled endpoint from this tangent hyperplane at the chosen closest-approach section.  The corresponding linearized boundary-reaching action is
\begin{equation}
S_{W^s}^{\rm lin}(\mu_2)
=
\frac{d_T(\mu_2)^2}{2Q_T(\mu_2)}.
\label{eq:SWs-lin-def}
\end{equation}
Near the threshold,
\[
d_T(\mu_2)=a_T(\mu_2-\mu_c)+O((\mu_2-\mu_c)^2),
\]
so \Cref{thm:quadratic-switching} predicts the quadratic limiting coefficient $a_T^2/(2Q_T)$.  The values in \Cref{tab:stable-action} are generated directly from the model coefficients by the standalone computation described in Appendix~A.
\begin{table}[htbp]
\centering
\caption{Computed linearized minimum action to the local deterministic reactivity boundary.}
\label{tab:stable-action}
\begin{tabular}{cc@{\qquad}cc}
\toprule
$\mu_2$ & $S_{W^s}^{\rm lin}$ & $\mu_2$ & $S_{W^s}^{\rm lin}$\\
\midrule
9.9762 & $7.62\times10^{-10}$ & 9.9900 & $1.898\times10^{-6}$\\
9.9770 & $1.121\times10^{-8}$ & 10.0000 & $5.550\times10^{-6}$\\
9.9780 & $4.148\times10^{-8}$ & 10.0200 & $1.860\times10^{-5}$\\
9.9800 & $1.594\times10^{-7}$ & 10.0500 & $5.251\times10^{-5}$\\
\bottomrule
\end{tabular}
\end{table}
With the value of $\mu_c$ obtained by the bisection above, a two-parameter least-squares fit in logarithmic coordinates gives
\begin{equation}
S_{W^s}^{\rm lin}
\approx0.00957591\,(\mu_2-\mu_c)^{2.00019}.
\label{eq:solvent-free-fit}
\end{equation}
The exponent is numerically indistinguishable from the value two predicted by \Cref{thm:quadratic-switching}.  Since \Cref{thm:quadratic-switching} fixes the exponent, the more appropriate coefficient comparison holds the exponent at two.  We use the log-coordinate estimate
\begin{equation}
C_2
=\exp\left[
\frac1N\sum_{j=1}^{N}
\log\left(\frac{S_j}{(\mu_{2,j}-\mu_c)^2}\right)
\right]
=0.00956668.
\label{eq:solvent-fixed-fit}
\end{equation}
\Cref{fig:solvent-action-scaling} shows the computed actions together with the quadratic laws obtained from the fitted coefficient $C_2$ and the independently determined critical-orbit coefficient $C_{\rm direct}$ discussed below.
The smallest action in \Cref{tab:stable-action} is of order $10^{-10}$ and is the most numerically delicate point; it is included in the displayed fit, but no separate scaling claim is inferred from that point alone.

\begin{figure}[htbp]
\centering
\includegraphics[width=0.70\textwidth]{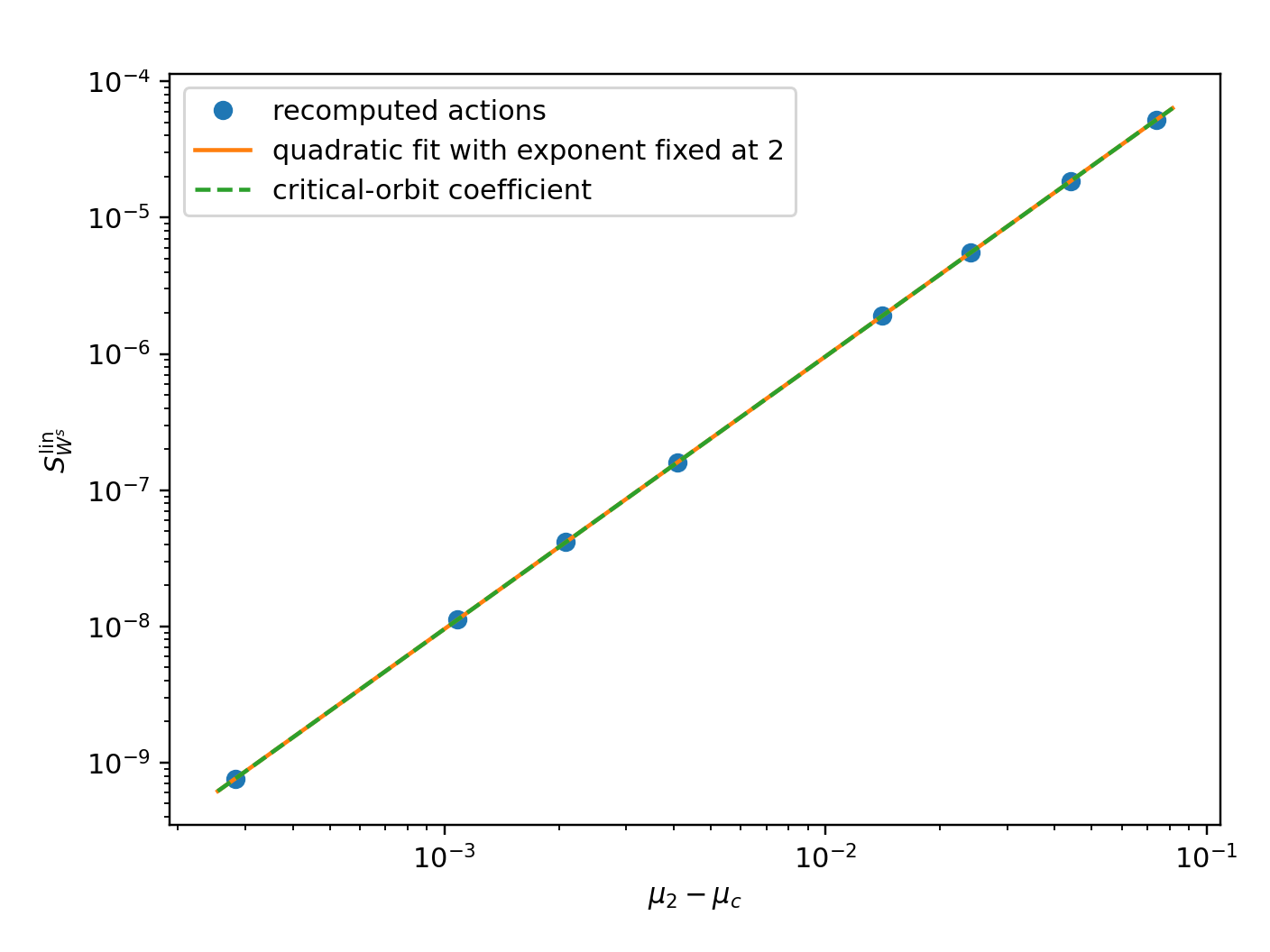}
\caption{Computed minimum linearized action for reaching the local deterministic reactivity boundary.  The horizontal coordinate is the distance from the computed deterministic threshold mass.  The solid quadratic reference uses the fixed-exponent coefficient $C_2$ in \eqref{eq:solvent-fixed-fit}; the dashed quadratic reference uses the critical-orbit coefficient $C_{\rm direct}$ in \eqref{eq:C-direct}.  The figure is generated directly from the numerical CSV supplied with the supplementary calculation package.}
\label{fig:solvent-action-scaling}
\end{figure}

The mass-weighted closest distance from the uncontrolled trajectory to the periodic orbit decreases approximately as $(\mu_2-\mu_c)^{0.52}$ over this short scan.  We do not use this exponent in \Cref{thm:quadratic-switching}.  The separate numerical values assigned to a Floquet coordinate and to its Gramian depend on the normalization and on the terminal section; the invariant object is their combination $a_T^2/(2Q_T)$, or equivalently the minimum action itself.  Accordingly, we do not assign separate scaling laws to these convention-dependent quantities.

\subsection{Second check: the coefficient from the threshold trajectory}

The first check uses several masses near $\mu_c$.  \Cref{thm:quadratic-switching} makes a stronger prediction: its leading coefficient can be obtained from the single deterministic trajectory at $\mu_c$, without fitting any off-threshold action values.  At the critical mass obtained above, shooting gives
\begin{equation}
T_{\rm per}=2.42823896,
\end{equation}
and integration of the variational equations gives
\begin{equation}
\Lambda_u\approx4.99427\times10^7,
\qquad
\lambda_u=\frac{\log\Lambda_u}{T_{\rm per}}\approx7.30010.
\label{eq:critical-floquet}
\end{equation}
The coefficient contains two pieces: the rate at which changing $\mu_2$ moves the trajectory relative to the stable-manifold boundary, and the effectiveness of solvent forcing in moving the trajectory across that boundary.  To include the motion of the transition-state geometry itself, we append $\mu_2$ as a frozen state, $\dot\mu_2=0$.  The left unstable Floquet covector of this extended system is then propagated backward along the threshold trajectory.  This single adjoint calculation gives both quantities entering $a_T^2/(2Q_T)$.  Four terminal sections give the values in \Cref{tab:critical-coefficient}.  The sign of $a_T$ depends only on the orientation chosen for the normal covector, so the table lists $|a_T|$.
\begin{table}[htbp]
\centering
\caption{Critical-orbit calculation of the quadratic coefficient, obtained without fitting the off-threshold action values.}
\label{tab:critical-coefficient}
\begin{tabular}{cccc}
\toprule
$T$ & $|a_T|$ & $Q_T$ & $a_T^2/(2Q_T)$\\
\midrule
1.50 & 100.30239 & 526056.54 & 0.00956225170447\\
1.55 & 100.30236 & 526056.18 & 0.00956225170436\\
1.60 & 100.30116 & 526043.62 & 0.00956225170429\\
1.65 & 100.30244 & 526057.01 & 0.00956225170433\\
\bottomrule
\end{tabular}
\end{table}
Thus
\begin{equation}
\boxed{C_{\rm direct}=0.00956225.}
\label{eq:C-direct}
\end{equation}
This calculation uses the critical deterministic orbit and adjoint equations only.  Comparing it with the theorem-constrained coefficient $C_2=0.00956668$ from \eqref{eq:solvent-fixed-fit} gives a relative difference of approximately $0.046\%$.  The small systematic excess of the off-critical ratios $S_j/(\mu_{2,j}-\mu_c)^2$ over $C_{\rm direct}$ is consistent with the use of a finite closest-approach section, which truncates the projected Gramian relative to the section-converged critical calculation.  For completeness, the unconstrained two-parameter fit \eqref{eq:solvent-free-fit} gives $C=0.00957591$ and exponent $2.00019$; the corresponding prefactor differs from $C_{\rm direct}$ by about $0.14\%$ because the fitted exponent and prefactor trade against one another.  The fixed-exponent comparison is therefore the primary coefficient check.

\subsection{Third check: a nonlinear controlled-flow calculation}

The calculations in \Cref{tab:stable-action} and \eqref{eq:C-direct} use the linearized endpoint response that enters \Cref{thm:quadratic-switching}.  To test the remaining linearization in the controlled dynamics, we performed a separate direct-collocation calculation at
\begin{equation}
\mu_2=\mu_c+0.05=10.02591761.
\end{equation}
The terminal time and periodic-orbit phase are fixed at the closest-approach section of the corresponding uncontrolled trajectory.  At that section the local stable-manifold boundary is represented by its unstable-Floquet tangent hyperplane.  The state equations \eqref{eq:solvent-control} themselves are not linearized: they are imposed with the full nonlinear potential \eqref{eq:solvent-V}, while the piecewise-constant solvent control is optimized to minimize \eqref{eq:solvent-action} subject to the terminal hyperplane constraint.

The linearized adjoint formula at the same section gives
\begin{equation}
S_{W^s}^{\rm lin}=2.3923834\times10^{-5}.
\end{equation}
Direct collocation gives
\begin{center}
\begin{tabular}{ccc}
\toprule
control intervals & nonlinear action & terminal residual from independent reintegration\\
\midrule
40  & $2.3094007\times10^{-5}$ & $1.75\times10^{-9}$\\
80  & $2.3882554\times10^{-5}$ & $9.12\times10^{-11}$\\
160 & $2.3921588\times10^{-5}$ & $5.06\times10^{-12}$\\
\bottomrule
\end{tabular}
\end{center}
As the number of control intervals is increased from $40$ to $80$ to $160$, the discretized nonlinear action approaches the linearized value.  At the finest discretization used here, $160$ intervals, the relative difference is $9.39\times10^{-5}$, or about $0.0094\%$.  We report this as the residual discrepancy at that discretization and do not claim that the exact continuum optimum has been reached.  This calculation is deliberately described as a controlled-flow check rather than as a global computation of the curved stable manifold: the terminal boundary is still represented by its local tangent hyperplane.  It therefore tests the linearization in the control-to-endpoint dynamics, which is the additional numerical check needed for \Cref{thm:quadratic-switching}, without introducing a separate global stable-manifold approximation.

Let $\mathcal E_T$ denote the local event that the noisy path reaches the deterministic reactivity boundary by time $T$.  This is the event computed by the minimum-action problem above.  At finite noise, crossing a deterministic invariant boundary is not by itself identical to probabilistic commitment to product, because subsequent noise can alter the path again; the committor is the natural object for that probabilistic question \cite{EVandenEijnden2006,EVandenEijnden2010}.  When the Freidlin--Wentzell upper and lower bounds match for the boundary-reaching event,
\begin{equation}
\lim_{\eps\to0}\eps\log\Pr(\mathcal E_T)=-S_T(\mu_2).
\end{equation}
Consequently \Cref{thm:quadratic-switching} predicts, locally,
\begin{equation}
\eps\log\Pr(\mathcal E_T)
\sim -C(\mu_2-\mu_c)^2
\end{equation}
on the logarithmic small-noise scale.  A uniform analysis of the joint limit $|\mu_2-\mu_c|=O(\sqrt\eps)$ would require a separate moderate-deviation argument and is not claimed.

\section{Discussion}
\label{sec:discussion}

The paper began with two questions that involve NHIMs in different ways.

The first question asked how the Freidlin--Wentzell fluctuation geometry is organized near a deterministic NHIM $N$.  The Freidlin--Wentzell construction enlarges the state from $x$ to $(x,p)$.  Near the zero-momentum copy $N_0$, the $k$ tangent directions of $N$ are accompanied by $k$ conjugate momentum directions.  This produces $2k$ center directions.  Locally, these directions are tangent to a symplectic invariant manifold having the same local symplectic structure as $T^*N$.  This center manifold describes the nearby Hamiltonian geometry, but it is not unique away from $N_0$.

The Hamiltonian fluctuation extremals that approach $N_0$ in backward time at the strong normal rate are organized by a different invariant object.  The union $W^{uu}(N_0)$ of the strong-unstable fibers has dimension $n$ and carries a single-valued action function.  In symplectic terminology, it is exact Lagrangian.  This statement matters because the projection of $W^{uu}(N_0)$ back to the original state space organizes the multiplicity of optimal fluctuation paths.  Where the projection folds, several extremals can reach the same state.  Where two minimizing action branches have equal value, the preferred fluctuation mechanism switches.

The counterexample clarifies a limitation that is easy to miss if one looks only at the deterministic system.  Normal hyperbolicity of $N$ is measured along trajectories on $p=0$.  Hamiltonian trajectories on a zero-energy section can have $p\ne0$, and this auxiliary momentum changes the motion in $x$.  Consequently the transverse rates sampled by those trajectories can differ from the deterministic rates.  Normal hyperbolicity of $N$ therefore organizes the local geometry near $N_0$ but does not, by itself, guarantee normal hyperbolicity of an entire zero-energy section within $H_{\mathrm{FW}}^{-1}(0)$.

The second question concerned a deterministic reactivity boundary supplied, in the reaction application, by the stable manifold of transition-state geometry.  If parameter variation changes the relative position of the deterministic trajectory and this boundary so that they meet at $\mu=\mu_c$, then a small fluctuation can reach the boundary when $\mu$ is close to $\mu_c$.  The minimum action required to reach the boundary satisfies
\[
S_T^{\rm loc}(\mu)
=\frac{a_T^2}{2Q_T}(\mu-\mu_c)^2+O(|\mu-\mu_c|^3).
\]
The coefficient has a direct interpretation.  The quantity $a_T$ measures how rapidly parameter variation moves the deterministic endpoint across the boundary.  The quantity $Q_T$ measures how effectively the available forcing can move the endpoint in the required normal direction.  Thus the boundary-reaching cost increases when the deterministic trajectory moves away from the boundary rapidly and decreases when the noise acts efficiently in the direction needed to cross it.  An adjoint calculation along the threshold trajectory provides these sensitivities without requiring a family of off-threshold optimization problems.

The solvent-inertia model makes the distinction between a potential-energy barrier and a deterministic phase-space reactivity boundary concrete.  The potential-energy surface is unchanged as the solvent mass varies, but the transition-state periodic orbit and its stable manifold move in phase space.  The equilibrium energy difference is therefore independent of solvent mass, whereas the finite-time action required to change the deterministic reactive fate is not.  The numerical calculations verify both the predicted quadratic dependence and the coefficient computed from the threshold trajectory, and the nonlinear controlled calculation checks the local linear approximation used in deriving \Cref{thm:quadratic-switching}.

The control formulation is also important because the random force need not act in every state variable.  In the solvent model it acts directly only on the solvent momentum, so the diffusion matrix is singular.  The quadratic law nevertheless applies because what matters is whether the available forcing can influence the normal direction to the deterministic reactivity boundary.  This suggests a natural extension to other systems in which environmental forcing acts through only a few collective coordinates.

\section{Conclusions}

Freidlin--Wentzell theory enlarges an $n$-dimensional stochastic state space to a $2n$-dimensional Hamiltonian phase space.  When the deterministic dynamics contains a $k$-dimensional NHIM $N$, this enlargement does more than copy $N$ into the zero-momentum set.  Near $N_0=N\times\{0\}$ there are $2k$ center directions, and these directions generate a local symplectic invariant manifold with cotangent-bundle geometry.  The Hamiltonian fluctuation extremals that approach $N_0$ in backward time at the strong normal rate lie in the strong-unstable manifold $W^{uu}(N_0)$, which carries their action as a single-valued function.  Normal hyperbolicity of the original $N$, however, need not make an entire zero-energy section normally hyperbolic within $H_{\mathrm{FW}}^{-1}(0)$ because nonzero conjugate momentum changes the trajectories along which transverse rates are sampled.

For reaction dynamics, a different question arises when a codimension-one deterministic reactivity boundary separates two reactive fates.  Near a parameter value at which the deterministic trajectory meets this boundary, the minimum weak-noise action needed to reach the boundary is quadratic in the distance from the threshold parameter.  The coefficient compares two effects: how quickly parameter variation moves the trajectory away from the boundary and how effectively the available noise can push it back toward the boundary.  Both can be obtained from the deterministic trajectory at threshold through an adjoint calculation, even when the noise acts directly in only some variables.

The solvent-inertia example shows why the distinction between a potential-energy barrier and a phase-space reactivity boundary matters physically.  Changing the solvent mass leaves the potential-energy surface fixed but moves the deterministic phase-space reactivity boundary.  The equilibrium energy barrier therefore remains unchanged, while the finite-time large-deviation cost of reaching the deterministic reactivity boundary varies.  In this sense the rare-event problem is organized by the dynamics of the phase-space boundary, not by the potential-energy saddle alone.

\appendix

\section{Reproducibility of the solvent-inertia calculation}
\label{app:solvent-numerics}

The numerical values in \Cref{sec:solvent-application} were computed directly from the Hamiltonian \eqref{eq:solvent-H}.  The supplementary calculation package contains standalone Python scripts that generate the reported values and CSV files.  The calculation is divided into three steps so that each can be checked separately.

First, the stationary points, the hyperbolic rate and oscillation frequency at the saddle, and the deterministic reactivity threshold are obtained directly from the published potential.  The threshold is found by bisection using the first-arrival convention stated in the main text.  Second, the transition-state periodic orbits are reconstructed by shooting from zero-momentum turning points.  The monodromy matrix and left unstable Floquet covector are obtained by integrating the variational equations along the periodic orbit.  Third, for each off-critical mass the unforced initial condition is integrated to its first return, its closest approach to the transition-state orbit is located, and the adjoint equation is integrated backward to obtain the boundary-normal Gramian quantity and the linearized boundary-reaching action.

The file \texttt{recompute\_solvent\_results.py} carries out the full calculation from the model coefficients.  Because that full scan is comparatively slow, the package also contains shorter scripts for the periodic-orbit reconstruction and the direct critical-orbit coefficient.  The CSV files used for \Cref{tab:stable-action,tab:critical-coefficient} are included.  No precomputed binary trajectory files are required.

The package also contains \texttt{nonlinear\_minimum\_action.py}, which independently solves the nonlinear controlled-flow check reported in the main text using direct collocation with 40, 80, and 160 control intervals and then reintegrates the optimized controls with a high-accuracy adaptive solver to verify the terminal constraint.  Separately normalized values of the unstable Floquet coordinate and its Gramian depend on normalization and terminal-section conventions; the action and the coefficient $a_T^2/(2Q_T)$ are invariant under rescaling of the normal covector, and these are the quantities reported in the main text.

The published input data are those of Ref.~\cite{GarciaMeseguerCarpenterWiggins2019}: Table A.1 gives the eight coefficients used in \eqref{eq:solvent-V}, and Sec.~3 gives $E=3.691966889$.

\section{Dimension summary}

It is useful to collect the dimensions of the principal geometric objects in one place.  If the physical state space has dimension $n$ and the deterministic NHIM has dimension $k$, the construction in \Cref{sec:carrier,sec:lagrangian} gives
\begin{center}
\begin{tabular}{lc}
\toprule
Object & Dimension\\
\midrule
$N_0=N\times\{0\}$ & $k$\\
Hamiltonian center bundle $E_H^c$ & $2k$\\
Local symplectic center manifold $\C_{\mathrm{FW}}$ & $2k$\\
Regular zero-energy section $\Sigma_{\mathrm{FW}}$ & $2k-1$\\
Strong-unstable manifold of fluctuation extremals $\Lagr_N^+$ & $n$\\
\bottomrule
\end{tabular}
\end{center}

In the two-degree-of-freedom reaction application the physical phase space is four-dimensional.  At fixed energy the transition-state NHIM is a one-dimensional unstable periodic orbit.  Allowing a small energy interval gives the two-dimensional cylinder $\mathcal N_{\mu_2,I}$ in \eqref{eq:energy-cylinder}, and its stable manifold is three-dimensional.  It is this codimension-one stable manifold that serves as the deterministic reactivity boundary for the degenerate-noise boundary-reaching problem.

\end{document}